\documentclass[a4paper,11pt]{amsart}
\usepackage{amssymb}
\usepackage[cyr]{aeguill}
\usepackage{cmap}
\usepackage{hyperxmp}
\usepackage[pdfdisplaydoctitle=true,
            colorlinks=true,
            urlcolor=blue,
            citecolor=blue,
            linkcolor=blue,
            pdfstartview=FitH,
            pdfpagemode= UseNone,
            bookmarksnumbered=true]{hyperref}
\usepackage{xcolor}


\theoremstyle{plain}
\newtheorem{thm}{Theorem}[section]
\newtheorem{cor}[thm]{Corollary}
\newtheorem{lem}[thm]{Lemma}
\newtheorem{prop}[thm]{Proposition}
\newtheorem*{pb}{Problem}

\theoremstyle{definition}
\newtheorem{defn}[thm]{Definition}

\theoremstyle{remark}
\newtheorem{rem}[thm]{Remark}

\newtheorem*{expl}{Example}

\setcounter{tocdepth}{2}


\newcommand{\CC}{\mathbb{C}}
\newcommand{\NN}{\mathbb{N}}
\newcommand{\RR}{\mathbb{R}}
\newcommand{\ZZ}{\mathbb{Z}}
\newcommand{\Circle}{\mathbb{S}^{1}}

\newcommand{\PSL}{\mathrm{PSL}}
\newcommand{\psl}{\mathfrak{sl}}

\newcommand{\eps}{\varepsilon}
\newcommand{\id}{\mathrm{id}}
\newcommand{\e}{\mathbf{e}}
\newcommand{\riem}{\mathfrak{exp}}
\newcommand{\g}{\mathfrak{g}}
\newcommand{\gstar}{\mathfrak{g}^{*}}

\newcommand{\DiffS}{\mathrm{Diff}^{\infty}(\mathbb{S}^{1})}
\newcommand{\DiffSfix}[1]{\mathrm{Diff}_{#1}^{\infty}(\mathbb{S}^{1})}
\newcommand{\RotS}{\mathrm{Rot}(\mathbb{S}^{1})}

\newcommand{\D}[1]{\mathcal{D}^{#1}(\mathbb{S}^{1})}
\newcommand{\Dfix}[2]{\mathcal{D}_{#1}^{#2}(\mathbb{S}^{1})}

\newcommand{\VectS}{\mathrm{Vect}(\mathbb{S}^{1})}
\newcommand{\CS}{\mathrm{C}^{\infty}(\mathbb{S}^{1})}
\newcommand{\CSC}{\mathrm{C}^{\infty}(\mathbb{S}^{1},\CC)}
\newcommand{\CSfix}[1]{\mathrm{C}_{#1}^{\infty}(\mathbb{S}^{1})}

\newcommand{\HH}[1]{H^{#1}(\mathbb{S}^{1})}
\newcommand{\HHfix}[2]{H_{#1}^{#2}(\mathbb{S}^{1})}
\newcommand{\HHhat}[2]{\hat{H}_{#1}^{#2}(\mathbb{S}^{1})}

\newcommand{\norm}[1]{\left\Vert#1\right\Vert}
\newcommand{\abs}[1]{\left\vert#1\right\vert}
\newcommand{\set}[1]{\left\{#1\right\}}
\newcommand{\op}[1]{\mathbf{op}\left(#1\right)}
\newcommand{\sgn}{\mathop{\mathrm{sgn}}}

\DeclareMathOperator{\ad}{ad} %
\DeclareMathOperator{\Ad}{Ad} %
\DeclareMathOperator{\range}{im} %


\hypersetup{
    pdftitle={Right-invariant Sobolev metrics of fractional order on the diffeomorphism group of the circle}, 
    pdfauthor={J. Escher and B. Kolev}, 
    pdfsubject={MSC 2010: 58D05, 35Q53}, 
    pdfkeywords={Euler equation, diffeomorphism group of the circle, Sobolev metrics of fractional order}, 
    pdflang=EN 
    }

\begin{document}

\title[Right-invariant Sobolev metrics $H^{s}$]{Right-invariant Sobolev metrics of fractional order on the diffeomorphism group of the circle}

\author{Joachim Escher}
\address{Institute for Applied Mathematics, University of Hanover, D-30167 Hanover, Germany}
\email{escher@ifam.uni-hannover.de}

\author{Boris Kolev}
\address{Aix Marseille Universit\'{e}, CNRS, Centrale Marseille, I2M, UMR 7373, 13453 Marseille, France}
\email{boris.kolev@math.cnrs.fr}

\subjclass[2010]{58D05, 35Q53}
\keywords{Euler equation, diffeomorphism group of the circle, Sobolev metrics of fractional order} %

\date{August 26, 2014}

\begin{abstract}
In this paper, we study the geodesic flow of a right-invariant metric induced by a general Fourier multiplier on the diffeomorphism group of the circle and on some of its homogeneous spaces. This study covers in particular right-invariant metrics induced by Sobolev norms of fractional order. We show that, under a certain condition on the symbol of the inertia operator (which is satisfied for the fractional Sobolev norm $H^{s}$ for $s \ge 1/2$), the corresponding initial value problem is well-posed in the smooth category and that the Riemannian exponential map is a smooth local diffeomorphism. Paradigmatic examples of our general setting cover, besides all traditional Euler equations induced by a local inertia operator, the Constantin-Lax-Majda equation, and the Euler-Weil-Petersson equation.
\end{abstract}

\maketitle

\tableofcontents
\clearpage

\section{Introduction}
\label{sec:introduction}

The interest for geodesic flows on diffeomorphism groups goes back to Arnold~\cite{Arn1966}. He recast the Euler equations of hydrodynamics of an ideal fluid as the geodesic flow for the $L^{2}$ right invariant Riemannian metric on the volume preserving diffeomorphism group. Arnold's paper was, somehow, rather formal from the analytical point of view. The well-posedness of the geodesic flow was established, subsequently, by Ebin and Marsden in \cite{EM1970}. To do so, they introduced Hilbert manifolds of diffeomorphisms of class $H^{q}$, and used them to approximate the \emph{Fr\'{e}chet manifold} of smooth diffeomorphisms. This framework was extended thereafter to other equations of physical relevance \cite{CK2003,HMR1998,KLM2008,Kol2004,Mis1998,Shk1998,EKW2012}. Among these studies, right invariant metrics induced by $H^{k}$ Sobolev norms ($k \in \NN$) on the diffeomorphism group of the circle, $\DiffS$, have been extensively investigated \cite{CK2003,KLM2008,Kol2004,Shk1998}. In \cite{EKW2012}, well-posedness of the geodesic flow for the homogeneous $H^{1/2}$ right-invariant metric on the homogeneous space $\DiffS/\RotS$ was established. The homogeneous $H^{3/2}$ right-invariant metric on $\DiffS/\PSL(2,\RR)$ was considered in \cite{Gay2009}.

It is the aim of the present paper to study well-posedness of geodesic flows and their corresponding Euler equation for $H^{s}$ Sobolev norms on $\DiffS$, where $s \in \RR^{+}$. One of the main difficulty which arise immediately, in this context, is that the inertia operator is \emph{non-local}. More precisely, such a \emph{right-invariant} metric on $\DiffS$ is induced by an inner product
\begin{equation*}
    \langle u,v \rangle = \int_{\Circle} (Au)v \, dx,
\end{equation*}
on $\VectS = \CS$, where $A:\CS \to \CS$ is a (non-local) \emph{Fourier multiplier}. To be able to make use of the framework proposed by Ebin and Marsden, one needs to extend the metric \emph{smoothly} on $\D{q}$, the Hilbert manifold of diffeomorphisms of Sobolev class $H^{q}$. When $A$ is of finite order $r \ge 0$, it extends to a bounded linear operator from $\HH{q}$ to $\HH{q-r}$ for $q$ large enough, and the smoothness of the metric is reduced to the following question, where
\begin{equation*}
  R_{\varphi}: v \mapsto v \circ \varphi, \quad \varphi \in \D{q}, \, v \in \HH{q}.
\end{equation*}

\begin{pb}
Given a Fourier multiplier $A$ of order $r \ge 0$, under which conditions is the mapping
\begin{equation}\label{es:flat-map}
	\varphi \mapsto A_{\varphi}:= R_{\varphi} \circ A \circ R_{\varphi^{-1}} , \qquad \D{q} \to \mathcal{L}(\HH{q},\HH{q-r})
\end{equation}
smooth?
\end{pb}

Note that the problem is not trivial in general, because the mapping
\begin{equation*}
  (\varphi,v) \mapsto R_{\varphi} (v), \quad \D{q}\times\HH{q} \to \HH{q}
\end{equation*}
is \emph{not differentiable} (see \cite{EM1970} for instance), and the mapping $\varphi \mapsto R_{\varphi}$ is not even continuous (see remark~\ref{rem:norm-continuity}). It is however a well-known fact that when $A$ is a differential operator of order $r$, $A_{\varphi}$ is a linear differential operator whose coefficients are polynomial expressions of $1/\varphi_{x}$ and the derivatives of $\varphi$ up to order $r$ (e.g. $D_{\varphi} = (1/\varphi_x)D$), see~\cite{EM1970,EK2011} for instance. In that case, $\varphi \mapsto A_{\varphi}$ is smooth (in fact real analytic) for $q \ge r$.

However, for a general Fourier multiplier, we are not aware of any results in this direction. In theorem~\ref{thm:main_theorem}, however, we give a \emph{sufficient condition on the symbol} of $A$ which ensures that the mapping~\eqref{es:flat-map} is smooth. This answers a question raised in \cite[Appendix A]{EM1970}, at least in the case of the diffeomorphism group of the circle. Up to the authors knowledge, these results are new.

\begin{rem}
Of course, there are Fourier multiplication operators $A$, of order less than $1$, for which the mapping
\begin{equation*}
  \varphi \mapsto A_{\varphi}, \quad \D{q} \to \mathcal{L}(\HH{q},\HH{q-r})
\end{equation*}
is smooth. However, the present proof of theorem~\ref{thm:main_theorem} works only for $r \ge 1$. So far, the authors have not been able to exhibit a counter-example which would show that the conclusion of theorem~\ref{thm:main_theorem} is false for $0 \le r < 1$. They are not aware either of an example of a Fourier multiplier for which the conclusion of theorem~\ref{thm:main_theorem} fails for all $q \ge 0$.
\end{rem}

Theorem~\ref{thm:main_theorem} applies, in particular, to the inertia operator $\Lambda^{2s}$ of the Sobolev metric $H^{s}$ on $\DiffS$ for $s\in \RR$ and $s \ge 1/2$ (corollary~\ref{cor:Lambda}). This allows us to prove that the corresponding \emph{weak Riemannian} metric and its geodesic spray can be smoothly extended to the Hilbert manifold approximation $\D{q}$ for sufficiently large $q\in \RR$. As a corollary, we are able to prove local existence and uniqueness of geodesics on $\D{q}$ and $\DiffS$ (theorem~\ref{thm:smooth_flow}), as well as the well-posedness of the corresponding \emph{Euler equation} (corollary~\ref{cor:Euler-well-posedness}).

It is a well known result that the \emph{group exponential map} on $\DiffS$ is not locally surjective \cite{Mil1984}. In~\cite{CK2003}, it was shown, moreover, that the \emph{Riemannian exponential map} for the $L^{2}$ metric on $\DiffS$ was not a local diffeomorphism. However, due to the fact that the spray of the $H^{s}$ metric is smooth for $s \ge 1/2$ (theorem~\ref{thm:smoothness_spray}), we are able to prove that the exponential map on $\DiffS$ is a local diffeomorphism, in that case (theorem~\ref{thm:exponential_map}). From this fact, we can deduce that given two nearby diffeomorphisms, there is a unique geodesic which joins them and that this geodesic is a \emph{local minimum} of both the arc-length and the energy functionals. For $s=1/2$, this local minimum is, however, not a global minimum \cite{BBHM2013}. This exhibits a surprising difference with finite dimensional Riemannian geometry.

We close our study by extending our results to Euler equations on some homogeneous spaces of $\DiffS$, namely $\DiffS/\RotS$, where $\RotS$ is the subgroup of all rigid rotations of the circle $\Circle$ and $\DiffS/\PSL(2,\RR)$, where $\PSL(2,\RR)$ is the subgroup of all rigid M\"{o}bius transformations which preserve the circle $\Circle$. The first case includes the Constantin-Lax-Majda equation \cite{EKW2012,EW2012}. The second case includes the Euler-Weil-Petersson equation, which is related to the Weil-Petersson metric on the universal Teichm\"{u}ller space $T(1)$  \cite{NS1995, TT2006}.

The plan of the paper is as follows. In Section~\ref{sec:geometric_framework}, we recall basic materials on right-invariant metrics on the diffeomorphism group. Section~\ref{sec:smoothness-metric-and-spray} is devoted to the study of the smoothness of the extended metric an its spray on the Hilbert manifolds $\D{q}$. In Section~\ref{sec:existence_results}, we prove local existence and uniqueness of the initial value problem for the geodesics of the right-invariant $H^{s}$ metric on $\DiffS$ and well-posedness of the corresponding Euler equation. In Section~\ref{sec:exponential_map} we deal with the Riemannian exponential map and discuss the problem of minimization of the arc-length and the energy. In Section~\ref{sec:homogeneous_spaces} we extend our study to the homogeneous spaces $\DiffS/\RotS$ and $\DiffS/\PSL(2,\RR)$. We prove well-posedness for the corresponding Euler equations. In Appendix~\ref{sec:fourier_multipliers}, we prove some lemmas on Fourier multipliers, while in Appendix~\ref{sec:boundedness_properties}, we provide estimates and local boundedness properties for the right translation operator $R_{\varphi}$.

\section{Geometric Framework}
\label{sec:geometric_framework}

Let $\DiffS$ be the group of all smooth and orientation preserving diffeomorphisms on the circle. This group is equipped with a \emph{Fr\'{e}chet manifold} structure, modelled on the \emph{Fr\'{e}chet vector space} $\CS$ (see Guieu and Roger~\cite{GR2007}). Since, moreover, composition and inversion are smooth for this structure, we say that $\DiffS$ is a \emph{Fr\'{e}chet-Lie group}, cf. \cite{Ham1982}. Its Lie algebra, $\VectS$, is the space of smooth vector fields on the circle. It is isomorphic to $\CS$ with the Lie bracket given by
\begin{equation*}
    [u,w] = u_{x}w - uw_{x}.
\end{equation*}

Let $A:\CS \to \CS$ be a $L^{2}$-symmetric, positive definite, continuous linear operator on $\CS$, we define the following inner product on the Lie algebra $\VectS = \CS$
\begin{equation*}
    \langle u,w \rangle := \int_{\Circle} (Au)w \, dx = \int_{\Circle} u(Aw) \, dx.
\end{equation*}
Translating this inner product on each tangent space, we get one on each tangent space $T_{\varphi} \DiffS$, given by
\begin{equation}\label{eq:right-invariant-metric}
    \langle v_{1}, v_{2} \rangle_{\varphi} = \langle v_{1}\circ\varphi^{-1}, v_{2}\circ\varphi^{-1} \rangle_{\id} = \int_{\Circle} v_{1} (A_{\varphi}v_{2}) \varphi_{x} \,dx,
\end{equation}
where $v_{1}, v_{2} \in T_{\varphi} \DiffS$, $A_{\varphi} = R_{\varphi} \circ A \circ R_{\varphi^{-1}} $, and $R_{\varphi}(v):=v\circ\varphi$. One generates this way a \emph{smooth}, \emph{weak Riemannian metric} on $\DiffS$. For historical reasons going back to Euler~\cite{Eul1765a}, $A$ is called the \emph{inertia operator} of the right-invariant metric.

A \emph{covariant derivative} on a Fr\'{e}chet manifold is a way of differentiating vector fields along paths. In general, a torsion-free, covariant derivative, compatible with a \emph{weak Riemannian metric} does not exists but if it does, it is unique. According to Arnold, a necessary and sufficient for the existence of a covariant derivative compatible with a right-invariant metric on $\DiffS$ is the existence of the \emph{Arnold bilinear operator}
\begin{equation*}
    B(u,v) = \frac{1}{2}\Big( \ad_{u}^{\top} v + \ad_{v}^{\top} u \Big),
\end{equation*}
where $u,v \in \CS$ and $\ad_{u}^{\top}$ is the adjoint of the operator $\ad_{u}$, with respect to $A$ (see~\cite{Arn1966} for instance). If $\varphi(t)$ is any path in $\DiffS$ and $\xi(t)$ is a field of tangent vectors along the path, we define the covariant derivative along the path to be
\begin{equation*}
    D_{t} \xi = R_{\varphi} \left( w_{t} + \frac{1}{2}[u,w] + B(u,w) \right),
\end{equation*}
where $u(t) := \varphi_{t} \circ \varphi^{-1}$ and $w(t) := \xi(t) \circ \varphi^{-1}$. One can check that this covariant derivative is torsion-free and metric-compatible.

\begin{lem}
If $A:\CS \to \CS$ is invertible and commutes with $d/dx$, then, the map $\ad_{u}^{\top}$ is well defined and given by
\begin{equation*}
    \ad_{u}^{\top} w = A^{-1}\left[ 2(Aw)u_{x} + (Aw)_{x}u \right],
\end{equation*}
for $u,w \in \CS$.
\end{lem}

\begin{proof}
We have
\begin{equation*}
  <\ad_{u}v,w> = \int_{\Circle} (Aw)(u_{x}v-uv_{x})\,dx = \int_{\Circle} \left[2(Aw)u_{x} + (Aw)_{x}u)\right]v\,dx
\end{equation*}
where $u,v,w \in\CS$. But since $A :\CS \to \CS$ is invertible, we get, finally
\begin{equation*}
    \ad_{u}^{\top} w = A^{-1}\left[ 2(Aw)u_{x} + (Aw)_{x}u \right].
\end{equation*}
\end{proof}

A \emph{geodesic} is a path $\varphi(t)$ in $\DiffS$, which is an extremal curve of the energy functional
\begin{equation*}
  \mathcal{E} := \frac{1}{2}\int_{0}^{1}  \langle u(t),u(t) \rangle\, dt ,
\end{equation*}
where $u(t) = \varphi_{t} \circ \varphi^{-1}$. The corresponding Euler-Lagrange equation
\begin{equation*}
  D_{t}\varphi_{t}=0
\end{equation*}
is equivalent to the following first order equation
\begin{equation}\label{eq:Euler_equation}
    u_{t} = -B(u,u) = -A^{-1}\left\{(Au)_{x}u+2(Au)u_{x}\right\},
\end{equation}
called the \emph{Euler equation}.

\begin{expl}
For the $L^{2}$-metric ($A=I$), the corresponding Euler equation~\eqref{eq:Euler_equation} is the \emph{inviscid Burgers equation}
\begin{equation*}
    u_{t} + 3uu_{x} = 0.
\end{equation*}
\end{expl}

\begin{expl}
For the $H^{1}$-metric ($A= I-d^{2}/dx^{2}$), the corresponding Euler equation~\eqref{eq:Euler_equation} is the \emph{Camassa-Holm equation}
\begin{equation*}
    u_{t} - u_{txx} + 3uu_{x} - 2u_{x}u_{xx} - uu_{xxx} = 0,
\end{equation*}
a model in the theory of shallow water waves \cite{CH1993,FF1981/82}.
\end{expl}

Let $\varphi$ be the flow of the time dependent vector field $u$ \textit{i.e.}, $\varphi_{t} = u \circ \varphi$ and let $v = \varphi_{t}$. Then $u$ solves the Euler equation~\eqref{eq:Euler_equation}, if and only if, $(\varphi,v)$ is a solution of
\begin{equation}\label{eq:ODE_spray}
\left\{\begin{aligned}
    \varphi_{t} &= v, \\
    v_{t} &= S_{\varphi}(v),
\end{aligned}
\right.
\end{equation}
where
\begin{equation*}
  S_{\varphi}(v):=\left(R_{\varphi}\circ S\circ R_{\varphi^{-1}} \right)(v),
\end{equation*}
and
\begin{equation*}
  S(u):= A^{-1}\left\{ [A,u]u_{x}-2(Au)u_{x}\right\}.
\end{equation*}
The \emph{second order vector field} on $\DiffS$, defined by
\begin{equation}\label{eq:geodesic-spray}
    F : (\varphi,v)\mapsto \left(\varphi,v,v, S_{\varphi}(v)\right)
\end{equation}
is called the \emph{geodesic spray}, following Lang~\cite{Lan1999}.

Suppose now that $A$ is a differential operator of order $r$. Then, the right hand side of the Euler equation is of \emph{order 1}. It is however quite surprising that, in Lagrangian coordinates, the propagator of evolution equation of the geodesic flow has better mapping properties, provided that the \emph{order} $r$ of $A$ is not less than 1. Indeed, in that case, the quadratic operator
\begin{equation*}
    S(u):= A^{-1}\left\{ [A,u]u_{x}-2(Au)u_{x}\right\}
\end{equation*}
is of order $0$ because the commutator $[A,u]$ is of order less than $\le r-1$. One might expect, that for a larger class of operators $A$, the quadratic operator $S$ to be of order $0$ and the second order system \eqref{eq:ODE_spray} to be the local expression of an ODE on some suitable Banach manifold.

This observation is at the root of a strategy proposed in the 70' by Ebin and Marsden~\cite{EM1970} to study well-posedness of the Euler equation. Following their approach, if we can prove \emph{local existence and uniqueness of geodesics} (ODE) on diffeomorphism groups then the PDE (Euler equation) is \emph{well-posed}. To do so, it is necessary to introduce an approximation of the Fr\'{e}chet--Lie group $\DiffS$ by \emph{Hilbert manifolds}. Let $\HH{q}$ be the completion of $\CS$ for the norm
\begin{equation*}
  \norm{u}_{H^{q}}:= \left( \sum_{k \in \ZZ}(1 + k^{2})^{q}\abs{\hat{u}_{k}}^{2} \right)^{1/2},
\end{equation*}
where $q\in \RR, q \ge 0$. We recall that $\HH{q}$ is a multiplicative algebra for $q > 1/2$ (cf. \cite[Theorem 2.8.3]{Tri1983}). This means that
\begin{equation*}
  \norm{uv}_{\HH{q}} \lesssim \norm{u}_{\HH{q}} \norm{v}_{\HH{q}}, \quad u,v \in \HH{q}.
\end{equation*}

\begin{defn}
We say that a $C^{1}$ diffeomorphism $\varphi$ of $\Circle$ is of class $H^{q}$ if for any of its lifts to $\RR$, $\tilde{\varphi}$, we have
\begin{equation*}
  \tilde{\varphi} - \mathrm{id} \in \HH{q}.
\end{equation*}
\end{defn}

For $q > 3/2$, the set $\D{q}$ of $C^{1}$-diffeomorphisms of the circle which are of class $H^{q}$ has the structure of a \emph{Hilbert manifold}, modelled on $\HH{q}$ (see~\cite{EM1970}). The manifold $\D{q}$ is also a \emph{topological group} but \emph{not a Lie group} (composition and inversion in $\D{q}$ are continuous but \emph{not differentiable}). Note however, that, given $\varphi \in \D{q}$,
\begin{equation*}
	u \mapsto R_{\varphi}(u) := u \circ \varphi, \qquad \HH{q} \to \HH{q}
\end{equation*}
is a \emph{smooth map}, and that
\begin{equation*}
	(u,\varphi) \mapsto u \circ \varphi, \qquad \HH{q+k} \times \D{q} \to \HH{q}
\end{equation*}
is of class $C^{k}$.

The Fr\'{e}chet Lie group group $\DiffS$ may be viewed as an inverse limit of \emph{Hilbert manifolds} (ILH)
\begin{equation*}
    \DiffS = \bigcap_{q > \frac{3}{2}} \D{q},
\end{equation*}
and we call the scales of manifolds $\D{q})_{q > 3/2}$, a Hilbert manifold approximation of $\DiffS$.

\begin{rem}
Note that the tangent bundle of the Hilbert manifold $\D{q}$ is trivial. Indeed, let $\mathfrak{t}: T\Circle \to \Circle \times \RR$ be a trivialisation of the tangent bundle of the circle. Then
\begin{equation*}
  \Psi : T\D{q} \to \D{q}\times \HH{q}, \qquad \xi \mapsto \mathfrak{t} \circ \xi
\end{equation*}
defines a smooth vector bundle isomorphism because $\mathfrak{t}$ is smooth (see~\cite[Page~107]{EM1970}).
\end{rem}

Within this framework, the case where the inertia operator $A$ is a \emph{differential} operator with constant coefficients has been extensively studied in the literature (see for instance \cite{CK2002,CK2003,EK2011}). It is the aim of the present paper to extend these results when $A$ is a general \emph{Fourier multiplier}, that is, a continuous linear operator on $\CS$, which commutes with $D:=d/dx$. In that case, we get
\begin{equation*}
    (Au)(x) = \sum_{k\in\mathbb{Z}} a(k)\hat{u}(k) \exp (2i\pi k x),
\end{equation*}
where $\hat{u}(k)$ is the $k$-th Fourier coefficients of $u$ (see lemma~\ref{lem:FOp}). The sequence $a: \mathbb{Z}\to\mathbb{C}$ is called the \emph{symbol} of $A$ and we shall use the notation $A = \op{a(k)}$. When $a(k) = \mathcal{O}(\abs{k}^{r})$, the Fourier multiplier $A=\op{a(k)}$ extends, for each $q \ge r$, to a bounded linear operator in $\mathcal{L}(\HH{q},\HH{q-r})$. It is said to be of order $r$.

\begin{expl}
The inertia operator for the $H^{s}$ Sobolev metric ($s \ge 1/2$), defined by
\begin{equation*}
  \Lambda_{2s}:= \op{(1+k^{2}}^{s})
\end{equation*}
is of order $2s \ge 1$.
\end{expl}

\section{Smoothness of the extended metric and spray}
\label{sec:smoothness-metric-and-spray}

The fact that a right-invariant metric, defined by~\eqref{eq:right-invariant-metric}, and its geodesic spray $F$, defined by~\eqref{eq:geodesic-spray} are smooth on $\DiffS$ is unfortunately useless to establish the well-posedness of the geodesic flow. What we need to do is to study under which conditions, the metric and its spray \emph{can be extended smoothly} to the Hilbert approximation manifolds $\D{q}$. In this section, we provide a criteria on the inertia operator $A$ (satisfied by almost all known examples) which ensures the smoothness of the metric on the extended manifolds $\D{q}$, for $q$ large enough.

For general materials on Banach manifolds, we refer to \cite{Lan1999}. Let $X$ be a Banach manifold modelled over a Banach space $E$. We recall that a Riemannian metric $g$ on $X$ is a smooth, symmetric, positive definite, covariant $2$-tensor field on $X$. In other words, we have for each $x \in X$ a symmetric, positive definite, bounded, bilinear form $g(x)$ on $T_{x}X$ and, in any local chart $U$, the mapping
\begin{equation*}
  x \to g(x), \qquad U \to \mathcal{L}_{\mathrm{sym}}^{2}(E,\RR)
\end{equation*}
is smooth. Given any $x\in X$, we can then consider the bounded, linear operator
\begin{equation*}
  h_{x} : T_{x}X \to T_{x}^{*}X,
\end{equation*}
called the \emph{flat map} and defined by $h_{x}(\xi_{x}) = g(x)(\xi_{x}, \cdot)$. The metric is \emph{strong} if $h_{x}$ is a topological linear isomorphism for all $x \in X$, whereas it is \emph{weak} if $h_{x}$ is only injective for all $x \in X$.

Given $A \in \mathrm{Isom}(\HH{q},\HH{q-r})$, it induces a \emph{bounded inner product} on each tangent space, $T_{\varphi}\D{q}$ for each $\varphi \in \D{q}$, given by
\begin{equation*}
    \langle v_{1}, v_{2} \rangle_{\varphi} = \int_{\Circle} v_{1} (A_{\varphi}v_{2}) \varphi_{x} \,dx,
\end{equation*}
where $A_{\varphi}:= R_{\varphi} \circ A \circ R_{\varphi^{-1}}$. To conclude, however, that the family $\langle \cdot , \cdot \rangle_{\varphi}$ defines a (weak) Riemannian metric on the Banach manifold $\D{q}$, we need to show that the mapping
\begin{equation*}
  \varphi \mapsto \varphi_{x}A_{\varphi}, \quad \D{q} \to \mathcal{L}^{2}(\HH{q},\HH{-q}),
\end{equation*}
is smooth.

\begin{rem}
Note that even when the flat map
\begin{equation*}
  \tilde{A} : (\varphi,v) \mapsto (\varphi,\varphi_{x}A_{\varphi}v), \quad \D{q}\times \HH{q} \to \D{q}\times \HH{-q}
\end{equation*}
is smooth and defines an injective vector bundle morphism, its image
\begin{equation*}
  \D{q}\times \HH{q-r}
\end{equation*}
is not a subbundle of $T^{*}\D{q}$ in the sense of \cite[III.3]{Lan1999}, because $\HH{q-r}$ is \emph{not a closed subspace} of $\HH{-q}$.
\end{rem}

For $q > 3/2$, the mappings $\varphi \mapsto \varphi_{x}$ and $\varphi \mapsto 1/\varphi_{x}$ are smooth from $\D{q} \to \HH{q-1}$. Thus, for $r \ge 1$ and $q-r \ge 0$, lemma~\ref{lem:pointwise_multiplication} shows that the metric is smooth if and only if
\begin{equation*}
	\varphi \mapsto A_{\varphi}, \qquad \D{q} \to \mathcal{L}(\HH{q},\HH{q-r})
\end{equation*}
is smooth. If this holds, we can compute, for each $n \ge 1$, the $n$-th Fr\'{e}chet differential\footnote{We have chosen to denote by $\partial$ the Fr\'{e}chet differential to avoid the confusion with the already used notation $D = d/dx$.}
\begin{equation*}
	\partial^{n}_{\varphi}A_{\varphi} \in \mathcal{L}^{n+1}(\HH{q},\HH{q-r}).
\end{equation*}
which is itself smooth.

\begin{lem}\label{lem:nth_derivative}
We have
\begin{equation}\label{eq:nth_derivative}
    \partial^{n}_{\varphi}A_{\varphi} (v,\delta\varphi_{1}, \dotsc ,\delta\varphi_{n}) = R_{\varphi}A_{n}R_{\varphi}^{-1}(v,\delta\varphi_{1}, \dotsc ,\delta\varphi_{n}),
\end{equation}
where
\begin{equation*}
  A_{n}:= \partial_{\id}A_{\varphi} \in \mathcal{L}^{n+1}(\HH{q},\HH{q-r})
\end{equation*}
is the $(n+1)$-linear operator defined inductively by $A_{0} = A$ and
\begin{multline}\label{eq:recurrence_relation}
    A_{n+1}(u_{0},u_{1}, \dotsc , u_{n+1}) = \nabla_{u_{n+1}} A_{n}(u_{0}, u_{1}, \dotsc , u_{n}) \\
    - \sum_{k=0}^{n}A_{n}(u_{0}, \dotsc ,\nabla_{u_{n+1}} u_{k}, \dotsc , u_{n}),
\end{multline}
where $\nabla$ is the canonical connection on the Lie group $\Circle$.
\end{lem}

\begin{rem}
For $n=1$, we get
\begin{equation*}
    A_{1}(u_{0},u_{1}) = [\nabla_{u_{1}},A]u_{0},
\end{equation*}
and for $n=2$, we get
\begin{equation*}
    A_{2}(u_{0},u_{1},u_{2}) = \big( [\nabla_{u_{2}},[\nabla_{u_{1}},A]] - [\nabla_{\nabla_{u_{2}}u_{1}},A] \big) u_{0}.
\end{equation*}
\end{rem}

\begin{proof}
We will make the computations for smooth functions. The general result follows from a density argument and the fact that the expressions are continuous. Formula~\eqref{eq:nth_derivative} is trivially true for $n=0$. Now suppose it is true for some $n\in\NN$, that is
\begin{equation*}
    \partial^{n}_{\varphi}A_{\varphi} (v,\delta\varphi_{1}, \dotsc ,\delta\varphi_{n}) = R_{\varphi}A_{n}R_{\varphi}^{-1}(v,\delta\varphi_{1}, \dotsc ,\delta\varphi_{n}),
\end{equation*}
Let $\varphi(s)$ be a smooth path in $\DiffS$ such that
\begin{equation*}
    \varphi(0) = \varphi, \qquad \partial_{s} \; \varphi(s) \big|_{s = 0} = \delta\varphi_{n+1}.
\end{equation*}
Set $u_{k} = \delta\varphi_{k}\circ\varphi^{-1}$, for $1 \le k \le n+1$ and $u_{0} = v\circ\varphi^{-1}$. We compute first
\begin{equation*}
	\partial_{s} \left\{ R_{\varphi(s)}w \right\}_{s = 0} = R_{\varphi} \left(u_{n+1}w_{x}\right),
\end{equation*}
for $w\in\CS$, and
\begin{equation*}
    \partial_{s} \left\{ R_{\varphi(s)}^{-1}w \right\}_{s = 0} = - u_{n+1} \left(R_{\varphi}^{-1}w\right)_{x},
\end{equation*}
for $w\in\CS$. Hence
\begin{multline*}
    \partial_{s} \left\{ R_{\varphi} A_{n}R_{\varphi}^{-1}(v,\delta\varphi_{1}, \dotsc ,\delta\varphi_{n}) \right\}_{s = 0} = \\
    R_{\varphi} \left\{ u_{n+1} (A_{n}(u_{0}, \dotsc ,u_{n}))_{x} \right\} - \sum_{k=0}^{n} R_{\varphi}A_{n}(u_{0},\dotsc , u_{n+1}(u_{k})_{x}, \dotsc ,u_{n}),
\end{multline*}
which gives the recurrence relation~\eqref{eq:recurrence_relation}, since
\begin{equation*}
    u_{n+1} (A_{n}(u_{0}, \dotsc ,u_{n}))_{x} = \nabla_{u_{n+1}} A_{n}(u_{0}, u_{1}, \dotsc , u_{n}),
\end{equation*}
and
\begin{equation*}
    A_{n}(u_{0}, \dotsc , u_{n+1}(u_{k})_{x}, \dotsc ,u_{n}) = A_{n}(u_{0}, \dotsc ,\nabla_{u_{n+1}} u_{k}, \dotsc , u_{n}).
\end{equation*}
\end{proof}

We shall prove now the following \emph{necessary and sufficient condition} for smoothness.

\begin{thm}[Smoothness Theorem]\label{thm:smoothness_theorem}
Let $A : \CS \to \CS$ be a continuous linear operator of order $r \ge 1$. Let $q > 3/2$ with $q-r \ge 0$. Then
\begin{equation*}
	\varphi \mapsto A_{\varphi}:= R_{\varphi} \circ A \circ R_{\varphi^{-1}} , \quad \D{q} \to \mathcal{L}(\HH{q},\HH{q-r})
\end{equation*}
is smooth, if and only if, each $A_n$ extends to a bounded $(n+1)$-linear operator in $\mathcal{L}^{n+1}(\HH{q},\HH{q-r})$.
\end{thm}

The idea of the proof of theorem~\ref{thm:smoothness_theorem}, which is inductive, is the following. First, we show that if $A_n$ is bounded, then the mapping
\begin{equation*}
  \varphi \mapsto A_{n,\varphi} := R_{\varphi}A_{n}R_{\varphi^{-1}}, \quad \D{q} \to \mathcal{L}^{n+1}(\HH{q},\HH{q-r})
\end{equation*}
is locally bounded. Then, we prove that if $A_{n+1,\varphi}$ is locally bounded, then $A_{n,\varphi}$ is locally Lipschitz. Finally we show that
if $A_{n+1,\varphi}$ is locally Lipschitz, then $A_{n,\varphi}$ is $C^{1}$. The full detail proof, given below, requires the following two elementary lemmas, which will be stated without proof.

\begin{lem}\label{lem:path_integral_continuity}
Let $X$ be a topological space and $E$ a Banach space. Let $f: [0,1] \times X\to E$ be a continuous mapping. Then the mapping
\begin{equation*}
  g(x):= \int_{0}^{1} f(t,x)\, dt
\end{equation*}
is continuous.
\end{lem}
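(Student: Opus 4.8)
The plan is to fix a point $x_{0}\in X$ and verify continuity of $g$ at $x_{0}$ directly from the definition; the only genuine ingredient is a compactness argument on $[0,1]$.

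First I would note that $g$ is well defined: for fixed $x$, the map $t\mapsto f(t,x)$ is continuous on the compact interval $[0,1]$ with values in the Banach space $E$, hence Riemann (Bochner) integrable, so $\int_{0}^{1}f(t,x)\,dt$ exists in $E$ and obeys the standard estimate $\norm{\int_{0}^{1}h(t)\,dt}\le\int_{0}^{1}\norm{h(t)}\,dt$ for every continuous $E$-valued $h$.

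The key step is the following uniform estimate: given $\eps>0$, there is an open neighbourhood $U$ of $x_{0}$ in $X$ such that $\norm{f(t,x)-f(t,x_{0})}<\eps$ for all $x\in U$ and all $t\in[0,1]$. To obtain it, for each $t_{0}\in[0,1]$ I would use continuity of $f$ at $(t_{0},x_{0})$ to find open neighbourhoods $V_{t_{0}}\ni t_{0}$ in $[0,1]$ and $U_{t_{0}}\ni x_{0}$ in $X$ with $\norm{f(t,x)-f(t_{0},x_{0})}<\eps/2$ on $V_{t_{0}}\times U_{t_{0}}$. Since $[0,1]$ is compact, finitely many $V_{t_{1}},\dots,V_{t_{N}}$ cover it; put $U:=\bigcap_{i=1}^{N}U_{t_{i}}$. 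For $x\in U$ and $t\in[0,1]$, choose $i$ with $t\in V_{t_{i}}$: then $(t,x)$ and $(t,x_{0})$ both lie in $V_{t_{i}}\times U_{t_{i}}$, and the triangle inequality gives $\norm{f(t,x)-f(t,x_{0})}\le\norm{f(t,x)-f(t_{i},x_{0})}+\norm{f(t_{i},x_{0})-f(t,x_{0})}<\eps$. This is essentially the tube lemma, and it is the only point that is not entirely routine.

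Finally, for $x\in U$ I would conclude
\begin{equation*}
  \norm{g(x)-g(x_{0})}=\norm{\int_{0}^{1}\bigl(f(t,x)-f(t,x_{0})\bigr)\,dt}\le\int_{0}^{1}\norm{f(t,x)-f(t,x_{0})}\,dt\le\eps,
\end{equation*}
so $g$ is continuous at $x_{0}$; since $x_{0}\in X$ and $\eps>0$ were arbitrary, $g$ is continuous on $X$.
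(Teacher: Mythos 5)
Your proof is correct: the tube-lemma compactness argument giving the uniform estimate $\norm{f(t,x)-f(t,x_{0})}<\eps$ for all $t\in[0,1]$ and $x$ in a neighbourhood of $x_{0}$, followed by the norm estimate for the Banach-valued integral, is exactly the standard way to establish this lemma. The paper explicitly states this lemma without proof (it is listed as one of two ``elementary lemmas that we will state without proof''), so there is no argument to compare against; your write-up simply supplies the routine details the authors chose to omit.
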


\begin{lem}\label{lem:mean_value_criteria}
Let $E$, $F$ be Banach spaces and $U$ a convex, open set in $E$. Let $\alpha: U \to \mathcal{L}(E,F)$ be a continuous mapping and $f : U \to F$ a mapping such that
\begin{equation*}
  f(y) - f(x) = \int_{0}^{1} \alpha(ty + (1-t)x)(y-x)\, dt,
\end{equation*}
for all $x,y \in U$. Then $f$ is $C^{1}$ on $U$ and $df = \alpha$.
\end{lem}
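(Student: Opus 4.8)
The plan is to verify directly that $f$ is Fréchet differentiable at every point of $U$ with derivative $\alpha$, and then to use the continuity of $\alpha$ to upgrade this to $f \in C^{1}$. Fix $x \in U$ and let $h \in E$ be small enough that $x + h \in U$; since $U$ is open and convex, the entire segment $\set{x + th : t \in [0,1]}$ lies in $U$. Applying the hypothesis with $y := x + h$, so that $ty + (1-t)x = x + th$ and $y - x = h$, gives
\[
  f(x+h) - f(x) = \int_{0}^{1} \alpha(x + th)(h)\, dt.
\]
Subtracting the candidate differential $\alpha(x)h = \int_{0}^{1} \alpha(x)(h)\, dt$ yields the remainder formula
\[
  f(x+h) - f(x) - \alpha(x)h = \int_{0}^{1} \big[\alpha(x+th) - \alpha(x)\big](h)\, dt,
\]
which is the identity on which the entire argument rests.

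Next I would estimate this remainder. Pulling the norm inside the (Riemann) integral of the continuous, $F$-valued integrand and using the operator-norm bound $\norm{[\alpha(x+th)-\alpha(x)](h)}_{F} \le \norm{\alpha(x+th)-\alpha(x)}_{\mathcal{L}(E,F)}\,\norm{h}_{E}$, I obtain
\[
  \frac{\norm{f(x+h)-f(x)-\alpha(x)h}_{F}}{\norm{h}_{E}} \le \int_{0}^{1} \norm{\alpha(x+th)-\alpha(x)}_{\mathcal{L}(E,F)}\, dt.
\]
Now I invoke the continuity of $\alpha$ at $x$: given $\eps > 0$, choose $\delta > 0$ so that $\norm{\alpha(z) - \alpha(x)}_{\mathcal{L}(E,F)} < \eps$ for all $z \in U$ with $\norm{z - x}_{E} < \delta$. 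If $\norm{h}_{E} < \delta$, then $\norm{(x+th) - x}_{E} = t\norm{h}_{E} \le \norm{h}_{E} < \delta$ for every $t \in [0,1]$, so the integrand is bounded by $\eps$ uniformly in $t$ and the right-hand side is $\le \eps$. Hence the difference quotient tends to $0$ as $h \to 0$, which is exactly the statement that $f$ is differentiable at $x$ with $df(x) = \alpha(x)$.

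Finally, since $x \in U$ was arbitrary, $f$ is differentiable on all of $U$ with $df = \alpha$; and because $\alpha : U \to \mathcal{L}(E,F)$ is continuous by hypothesis, $f$ is of class $C^{1}$. The only point requiring a word of care, rather than a genuine obstacle, is the legitimacy of the vector-valued integral and of the norm estimate: the integrand $t \mapsto [\alpha(x+th)-\alpha(x)](h)$ is a continuous map from $[0,1]$ into the Banach space $F$, hence Riemann integrable, and the inequality $\norm{\int_{0}^{1} g(t)\,dt}_{F} \le \int_{0}^{1} \norm{g(t)}_{F}\, dt$ for such integrands is standard. Everything else is an elementary $\eps$--$\delta$ argument, made possible by the convexity of $U$, which guarantees that the segment of integration never leaves the domain.
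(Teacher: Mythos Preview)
Your proof is correct and entirely standard. The paper itself states this lemma without proof, calling it an ``elementary lemma'' recalled for later use; your argument supplies exactly the expected verification via the integral remainder formula and an $\eps$--$\delta$ estimate based on the continuity of $\alpha$.
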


\begin{proof}[Proof of theorem~\ref{thm:smoothness_theorem}]
Note first that for $q > 3/2$ and $q-r \ge 0$, the mapping
\begin{equation*}
  \varphi \mapsto R_{\varphi}, \quad \D{q} \to \mathcal{L}(\HH{q-r},\HH{q-r})
\end{equation*}
is locally bounded (lemma~\ref{lem:Rphi_estimate}) and that the mapping
\begin{equation*}
  (\varphi,v) \mapsto v \circ \varphi, \quad \D{q} \times \HH{q-r} \to \HH{q-r}
\end{equation*}
is continuous (corollary~\ref{cor:Rphi_continuity}). Since, moreover, the mapping
\begin{equation*}
  \varphi \mapsto \varphi^{-1}, \quad \D{q} \to \D{q}
\end{equation*}
is continuous for $q > 3/2$ (see \cite{EM1970,IKT2013}), we conclude that
\begin{equation*}
  \varphi \mapsto R_{\varphi^{-1}}, \quad \D{q} \to \mathcal{L}(\HH{q},\HH{q})
\end{equation*}
is locally bounded and that
\begin{equation*}
  (\varphi,v) \mapsto v \circ \varphi^{-1}, \quad \D{q} \times \HH{q} \to \HH{q}
\end{equation*}
is continuous.

The fact that the boundedness of the $A_{n}$ is a necessary condition results from lemma~\ref{lem:nth_derivative}. Conversely, suppose that each $A_{n}$ is bounded in $\mathcal{L}^{n+1}(\HH{q},\HH{q-r})$, then
\begin{equation*}
  A_{n,\varphi} := R_{\varphi}A_{n}R_{\varphi^{-1}} \in \mathcal{L}^{n+1}(\HH{q},\HH{q-r}),
\end{equation*}
is a bounded $(n+1)$-linear operator, with
\begin{multline*}
  \norm{A_{n,\varphi}}_{\mathcal{L}^{n+1}(H^{q},H^{q-r})} \le \\
  \norm{R_{\varphi}}_{\mathcal{L}(H^{q-r},H^{q-r})} \norm{A_{n}}_{\mathcal{L}^{n+1}(H^{q},H^{q-r})} \norm{R_{\varphi^{-1}}}_{\mathcal{L}(H^{q},H^{q})}^{n+1},
\end{multline*}
and we conclude, thus, that
\begin{equation*}
  \varphi \mapsto A_{n,\varphi}, \quad \D{q} \to \mathcal{L}^{n+1}(\HH{q},\HH{q-r})
\end{equation*}
is locally bounded, for each $n \in \NN$.

(a) We will first show that $\varphi \mapsto A_{n,\varphi}$ is \emph{locally Lipschitz continuous}\footnote{On $\D{q}$, we did not introduce any distance compatible with the topology. The concept of a \emph{locally Lipschitz mapping} $f : M \to E$, from a Banach manifold $M$ to a Banach vector space $E$ does not require such an additional structure. It is defined using a local chart, and then shown to be independent of the choice of the particular chart.}, for each $n \in \NN$. Let $\psi\in\D{q}$ be given. Because $\varphi \mapsto A_{n+1,\varphi}$ is \emph{locally bounded}, it is possible to find a neighbourhood $U$ of $\psi$ and a positive constant $K$ such that
\begin{equation*}
    \norm{A_{n+1,\varphi}}_{\mathcal{L}^{n+1}(H^{q},H^{q-r})} \le K, \quad \forall \varphi \in U.
\end{equation*}
We can further assume (using a local chart) that $U$ is a ball in $\HH{q}$. Pick now $\varphi_{0}$ and $\varphi_{1}$ in $\DiffS \cap U$ and set $\varphi(t):=(1-t)\varphi_{0}+t\varphi_{1}$ for $t\in [0,1]$. Choosing $v_{0},\dotsc,v_{n}\in\CS$ with $\norm{v_j}_{H^{q}}\le 1$, we obtain from lemma~\ref{lem:nth_derivative} that
\begin{equation*}
    A_{n, \varphi_{1}}(v_{0}, \dotsc ,v_{n}) - A_{n, \varphi_{0}}(v_{0}, \dotsc ,v_{n}) =
    \int_{0}^{1} A_{n+1, \varphi(t)}(v_{0}, \dotsc ,v_{n}, \varphi_{1}-\varphi_{0} )\, dt.
\end{equation*}
This implies
\begin{equation*}
   \norm{A_{n, \varphi_{1}}(v_{0}, \dotsc ,v_{n}) - A_{n, \varphi_{0}}(v_{0}, \dotsc ,v_{n})}_{H^{q-r}} \le
    K \norm{ \varphi_{1}-\varphi_{0}}_{H^{q}},
\end{equation*}
for all $v_{0},\dotsc,v_{n}\in\CS$ with $\norm{v_j}_{H^{q}}\le 1$. The assertion that $A_{n, \varphi}$ is Lipschitz continuous follows from the density of the embedding $\CS\hookrightarrow\HH{q}$, and continuity of the mapping
\begin{equation*}
  (\varphi,v_{0},\dotsc,v_{n}) \mapsto A_{n,\varphi}(v_{0},\dotsc,v_{n}), \quad \D{q}\times \HH{q}^{n+1} \to \HH{q-r}.
\end{equation*}

(b) We will now show by induction, that $\varphi \mapsto A_{\varphi}$ is of class $C^{n}$ for all $n \in \NN$, and that its $n$-th Fr\'{e}chet derivative is $A_{n,\varphi}$. For each $n \ge 1$, let
\begin{equation*}
  \alpha_{n}: \D{q} \to \mathcal{L}\Big(\HH{q},\mathcal{L}^{n}\big(\HH{q},\HH{q-r}\big)\Big),
\end{equation*}
be the \emph{Lipschitz continuous} mapping defined by
\begin{equation*}
  \alpha_{n}(\varphi):= \big[\delta\varphi_{n} \mapsto A_{n,\varphi}(\cdot,\dotsc,\cdot,\delta\varphi_{n})\big].
\end{equation*}
Let $U$ be a local chart in $\D{q}$, that we choose to be a convex open subset of $\HH{q}$. By its very definition, we have
\begin{equation*}
  A_{\varphi_{1}}(v) - A_{\varphi_{0}}(v) = \int_{0}^{1} A_{1, t\varphi_{1} + (1-t)\varphi_{0}}(v,\varphi_{1}-\varphi_{0})\, dt,
\end{equation*}
for all $\varphi_{0},\varphi_{1}\in U\cap\CS$ and $v \in \CS$. But, the continuity of the mapping $\varphi \mapsto A_{1,\varphi}$, together with lemma~\ref{lem:path_integral_continuity}, and the density of the embedding $\CS\hookrightarrow\HH{q}$, permit to conclude that this formula is still true for all $\varphi_{0}, \varphi_{1} \in U$ and $v\in \HH{q}$. Therefore, we can write in $\mathcal{L}(\HH{q},\HH{q-r})$
\begin{equation*}
  A_{\varphi_{1}} - A_{\varphi_{0}} = \int_{0}^{1} \alpha_{1}(t\varphi_{1} + (1-t)\varphi_{0})(\varphi_{1}-\varphi_{0})\, dt,
\end{equation*}
and, by virtue of lemma~\ref{lem:mean_value_criteria}, we conclude that $\varphi \mapsto A_{\varphi}$ is $C^{1}$ and that $DA_{\varphi} = \alpha_{1}$. A similar argument shows that, for each $n \ge 1$, we have
\begin{equation*}
  \alpha_{n}(\varphi_{1}) - \alpha_{n}(\varphi_{0}) = \int_{0}^{1} \alpha_{n+1}(t\varphi_{1} + (1-t)\varphi_{0})(\varphi_{1}-\varphi_{0})\, dt,
\end{equation*}
and hence that $\alpha_{n}$ is $C^{1}$ with $D\alpha_{n} =\alpha_{n+1}$. This completes the proof.
\end{proof}

When $A$ is a Fourier multiplier, a criteria on the symbol $a$ of $A$ which ensures that all the $A_{n}$ are bounded and thus that the metric is smooth is given below. It's proof is a direct consequence of lemma~\ref{lem:nth_derivative_symbol}, lemma~\ref{lem:n_order_estimate} and corollary~\ref{cor:nth_derivative_estimates} in Appendix~\ref{sec:fourier_multipliers}.

\begin{thm}\label{thm:main_theorem}
  Let $A=\op{a(k)}$ be a Fourier multiplier of order $r \ge 1$. Suppose that its symbol $a$ extends to $\RR$ and that for each $n\ge 1$, the function
  \begin{equation*}
    f_{n}(\xi) := \xi^{n-1}a(\xi)
  \end{equation*}
  is of class $C^{n-1}$, that $f_n^{(n-1)}$ is absolutely continuous and that there exists $C_{n}>0$ such that
  \begin{equation}\label{eq:nth_derivative_condition}
    \abs{f_n^{(n)}(\xi)} \le C_{n} (1+\xi^{2})^{(r-1)/2},
  \end{equation}
  almost everywhere. Then,
  \begin{equation*}
	\varphi \mapsto A_{\varphi}:= R_{\varphi} \circ A \circ R_{\varphi^{-1}} , \quad \D{q} \to \mathcal{L}(\HH{q},\HH{q-r})
  \end{equation*}
  is smooth for $q > 3/2$ and $q-r \ge 0$.
\end{thm}

\begin{rem}
This criteria is always satisfied when the symbol $A=\op{a(k)}$ belongs to the class $\mathcal{S}^{r}$, that is when $a$ can be extended to a smooth function on $\RR$ such that
\begin{equation}\label{eq:symbol-estimate}
  a^{(k)}(\xi) = \mathcal{O}(\abs{\xi}^{r-k}), \quad \forall k \in \NN.
\end{equation}
\end{rem}

This applies in particular to the inertia operator $\Lambda^{2s}$ of the Sobolev metric $H^{s}$, when $s \ge 1/2$. Indeed, let $a_{s}(\xi):= \left(1+ \xi^{2}\right)^{s}$ be the symbol of $\Lambda^{2s}$. One can check that
\begin{equation*}
    a_{s}^{(k)}(\xi) = \frac{p_{k}(\xi)}{(1+\xi^{2})^{k}}a_{s}(\xi),
\end{equation*}
for $k \ge 1$, where $p_{k}$ is a polynomial function with $d(p_{k}) \le k$. Thus, \eqref{eq:symbol-estimate} is true for $a_{s}$, and we have the following result.

\begin{cor}\label{cor:Lambda}
Let $s \in \RR$ and $\Lambda^{2s} := \op{ \left( 1+ n^{2} \right)^{s} }$. If $s \ge 1/2$ then the mapping
\begin{equation*}
	\varphi \mapsto R_{\varphi}\circ\Lambda^{2s}\circ R_{\varphi^{-1}}, \quad \D{q} \to \mathcal{L}(\HH{q},\HH{q-2s})
\end{equation*}
is smooth for $q > 3/2$ and $q-2s \ge 0$.
\end{cor}

On a finite dimensional manifold, as soon as the metric is $C^{k}$, the geodesic spray is $C^{k-1}$, because the components of the spray, in any local chart, involve the \emph{Christoffel symbols} which depend on the first derivatives of the metric. We might, therefore, expect some kind of analog results to hold for a weak Riemannian metric on a Banach manifold, as soon as the spray exists.

\begin{thm}\label{thm:smoothness_spray}
Let $A$ be a Fourier multiplier of order $r \ge 1$ and let $q > 3/2$, with $q-r \ge 0$. Suppose, moreover, that
\begin{equation*}
  \varphi \mapsto A_{\varphi} = R_{\varphi} \circ A \circ R_{\varphi^{-1}}, \quad \D{q} \to \mathrm{Isom}(\HH{q},\HH{q-r})
\end{equation*}
is smooth. Then the geodesic spray
\begin{equation*}
	(\varphi, v) \mapsto S_{\varphi} (v) = R_{\varphi} \circ S \circ R_{\varphi^{-1}} (v),
\end{equation*}
where
\begin{equation*}
    S(u) = A^{-1} \left\{ [A,u] u_{x} - 2(A u)u_{x} \right\}.
\end{equation*}
extends smoothly to $T\D{q} = \D{q}\times \HH{q}$.
\end{thm}

\begin{proof}
Let  $P(u) := (A u) u_{x}$ and $Q(u) := [A,u] u_{x}$. Then,
\begin{equation*}
    S_{\varphi}(v) = A_{\varphi}^{-1} \left\{ Q_{\varphi}(v) - 2P_{\varphi}(v) \right\}.
\end{equation*}
The proof reduces to establish, using the chain rule, that the three mappings
\begin{equation*}
  (\varphi,v) \mapsto P_{\varphi}(v), \quad (\varphi,v) \mapsto Q_{\varphi}(v), \quad (\varphi,w) \mapsto A_{\varphi}^{-1} (w)
\end{equation*}
are smooth.

(a) We have $P_{\varphi}(v) = \big( A_{\varphi} (v) \big)\big(  D_{\varphi}(v) \big)$. But
\begin{equation*}
	(\varphi,v) \mapsto  A_{\varphi}(v), \quad \D{q}\times\HH{q} \to \HH{q-r}
\end{equation*}
is smooth by hypothesis, whereas
\begin{equation*}
	(\varphi,v) \mapsto  D_{\varphi}(v),\quad \D{q}\times\HH{q} \to \HH{q-1}
\end{equation*}
is smooth since $D_{\varphi}(v) = v_{x}/\varphi_{x}$ and $\HH{q-1}$ is a multiplicative algebra for $q > 3/2$. To conclude that
\begin{equation*}
	(\varphi,v) \mapsto  P_{\varphi}(v), \quad \D{q}\times\HH{q} \to \HH{q-r}
\end{equation*}
is smooth, we use the fact that pointwise multiplication extends to a bounded bilinear mapping
\begin{equation*}
  \HH{q-r} \times \HH{q-1} \to \HH{q-r},
\end{equation*}
if $q-1 > 1/2$ and $0 \le q-r \le q-1$ (c.f. lemma~\ref{lem:pointwise_multiplication}).

(b) By virtue of lemma~\ref{lem:nth_derivative}, we have
\begin{equation*}
	\partial_{\varphi}A_{\varphi} (v,v) = A_{1,\varphi}(v,v) = - Q_{\varphi}(v),
\end{equation*}
and therefore
\begin{equation*}
	(\varphi,v) \mapsto  Q_{\varphi}(v), \quad \D{q}\times\HH{q} \to \HH{q-r}
\end{equation*}
is smooth.

(c) The set $\mathrm{Isom}(\HH{q},\HH{q-r})$ is open in $\mathcal{L}(\HH{q},\HH{q-r})$ and the mapping
\begin{equation*}
    P \mapsto P^{-1}, \quad \mathrm{Isom}(\HH{q},\HH{q-r}) \to \mathrm{Isom}(\HH{q-r},\HH{q})
\end{equation*}
is smooth (it is even real analytic). Besides $A_{\varphi} \in \mathrm{Isom}(\HH{q},\HH{q-r})$, for all $\varphi \in \D{q}$, and the mapping
\begin{equation*}
    \varphi \mapsto A_{\varphi}, \quad \D{q} \to \mathrm{Isom}(\HH{q},\HH{q-r})
\end{equation*}
is smooth. Thus
\begin{equation*}
	(\varphi,w) \mapsto  A^{-1}_{\varphi}(w), \quad  \HH{q-r} \to \HH{q}
\end{equation*}
is smooth.
\end{proof}

We have, in particular, the following corollary.

\begin{cor}{\emph(Smoothness of the $H^{s}$ metric and its spray)}
Let $s \ge 1/2$ and assume that $q > 3/2$, $q-2s \ge 0$. Then the right-invariant, weak Riemannian metric defined on $\DiffS$ by the inertia operator $A=\Lambda^{2s}$ extends to a \emph{smooth} weak Riemannian metric on the Banach manifold $\D{q}$ with a smooth geodesic spray.
\end{cor}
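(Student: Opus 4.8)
The plan is to verify that the inertia operator $A = \Lambda^{2s} = \op{(1+n^2)^s}$ satisfies all the structural hypotheses required by Proposition~\ref{prop:smoothness_spray}, and then simply invoke that proposition together with Corollary~\ref{cor:Lambda}. Concretely, Proposition~\ref{prop:smoothness_spray} demands three things of $A$: that it is a Fourier multiplier of some order $r \ge 1$; that it restricts to a bounded linear isomorphism $\HH{q} \to \HH{q-r}$; and that the conjugation map $\varphi \mapsto A_\varphi$ is smooth as a map $\D{q} \to \mathcal{L}(\HH{q},\HH{q-r})$ for the relevant range of $q$. Here $r = 2s$, which is $\ge 1$ precisely because $s \ge 1/2$, so the order hypothesis is exactly the threshold case of our standing assumption.

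For the isomorphism property, the symbol $n \mapsto (1+n^2)^s$ is real, strictly positive, never vanishes, and satisfies $(1+n^2)^s = O(|n|^{2s})$ together with $(1+n^2)^{-s} = O(|n|^{-2s})$; hence $A$ and its inverse $\Lambda^{-2s} = \op{(1+n^2)^{-s}}$ are Fourier multipliers of orders $2s$ and $-2s$ respectively, so $A : \HH{q} \to \HH{q-2s}$ is a bounded linear isomorphism for every $q \in \RR$ (in fact directly from the definition of the $\HH{q}$-norm, multiplication by $(1+n^2)^s$ on Fourier coefficients is an isometry $\HH{q} \to \HH{q-2s}$). Moreover $A$ is $L^2$-symmetric and positive definite since its symbol is real and positive, which is the standing assumption made just before Proposition~\ref{prop:smoothness_spray}. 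The smoothness of $\varphi \mapsto A_\varphi$ is then precisely the content of Corollary~\ref{cor:Lambda}, valid for $q \in (\frac{3}{2}+2s,\infty)$, which is the same range of $q$ appearing in Proposition~\ref{prop:smoothness_spray} with $r = 2s$.

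With these verifications in hand, Proposition~\ref{prop:smoothness_spray} applies verbatim and yields that the geodesic spray $(\varphi,v) \mapsto S_\varphi(v) = R_\varphi \circ S \circ R_{\varphi^{-1}}(v)$, with $S(u) = \Lambda^{-2s}\{[\Lambda^{2s},u]u_x - 2(\Lambda^{2s}u)u_x\}$, is a well-defined smooth map $\D{q} \times \HH{q} \to \HH{q}$. Finally, that the weak Riemannian metric itself extends smoothly to $\D{q}$ follows from the discussion preceding Proposition~\ref{prop:smoothness_spray}: since $A$ fulfils the hypotheses of Theorem~\ref{thm:main_theorem} (again via Corollary~\ref{cor:Lambda}), the map $\varphi \mapsto \varphi_x A_\varphi$ from $\D{q}$ to $\mathcal{L}(\HH{q},\HH{q-2s})$ is smooth, and the associated family of inner products $\langle v_1, v_2 \rangle_\varphi = \int_{\Circle} v_1 (A_\varphi v_2)\varphi_x\, dx$ depends smoothly on $\varphi$, defining a smooth weak Riemannian metric with flat map $h:(\varphi,v)\mapsto(\varphi,\varphi_x A_\varphi v)$.

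There is essentially no obstacle here, since all the analytic work has been done in Sections~\ref{sec:fourier_multipliers} and~\ref{sec:metric_and_spray}; the only point requiring a moment's care is checking that the order $r = 2s$ meets the requirement $r \ge 1$, which is exactly why the hypothesis $s \ge 1/2$ is sharp for this argument, and confirming that the interval $(\frac{3}{2}+2s,\infty)$ for $q$ is consistent across Corollary~\ref{cor:Lambda} and Proposition~\ref{prop:smoothness_spray}. The proof is therefore a short assembly of previously established results.
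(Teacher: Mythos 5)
Your proposal is correct and follows the same route as the paper, which presents this corollary as an immediate application of Proposition~\ref{prop:smoothness_spray} together with Corollary~\ref{cor:Lambda}; your extra checks (that the symbol $(1+n^{2})^{s}$ gives an isomorphism $\HH{q}\to\HH{q-2s}$, that $r=2s\ge 1$ exactly when $s\ge 1/2$, and that the ranges of $q$ match) are precisely the verifications the paper leaves implicit. Nothing is missing.
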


\begin{rem}
When $A$ is a \emph{differential operator}, theorem~\ref{thm:smoothness_spray} can be sharpened; in that case, we can conclude that the spray is smooth when $q > 3/2$ and $q \ge r-1$. Indeed, we have
\begin{equation*}
   \int_{\Circle} \left\{ [A,u] u_{x} - 2(A u)u_{x} \right\} \, dx = 0,
\end{equation*}
so that $[A,u] u_{x} - 2(A u)u_{x} = DB(u)$ where $B$ is a quadratic, differential operator of order $r-1$ (see~\cite[Chapter 4]{Olv1993}). Hence
\begin{equation*}
  (\varphi,v) \mapsto B_{\varphi}(v), \quad \D{q}\times \HH{q} \to \HH{q-r+1}
\end{equation*}
is smooth for $q \ge r-1$. Moreover, the symbol $a$ of $A$ is a real, even polynomial (because $A$ is $L^{2}$-symmetric) with no real roots (because $A$ is invertible). Therefore, $A$ can be written as
\begin{equation*}
  A = A^{\prime}(\alpha -iD)(\bar{\alpha} -iD),
\end{equation*}
where $\alpha\in \CC\setminus\RR$ and $A^{\prime}$ is an invertible, differential operator of degree $r-2$. We have thus
\begin{equation*}
  A^{-1}D = (A^{\prime})^{-1}\frac{1}{2}\left( \frac{\alpha}{\mathrm{Im}\,\alpha}(\alpha -iD)^{-1} - \frac{\bar{\alpha}}{\mathrm{Im}\,\alpha}(\bar{\alpha} -iD)^{-1}\right).
\end{equation*}
The conclusion follows now from the fact that
\begin{equation*}
  \varphi \mapsto A^{\prime}_{\varphi}, \quad \D{q} \to \mathrm{Isom}(\HH{q},\HH{q-r+2})
\end{equation*}
is smooth if $q > 3/2$ and $q \ge r-2$ and that
\begin{equation*}
  \varphi \mapsto (\alpha -iD)_{\varphi},\,(\bar{\alpha} -iD)_{\varphi} \quad \D{q} \to \mathrm{Isom}(\HH{q},\HH{q-1})
\end{equation*}
are smooth if $q > 3/2$. This applies in particular for the Camassa-Holm equation where the spray is smooth for $q > \frac{3}{2}$ (in \cite{Mis2002}, it was proved that the spray is of class $C^{1}$ for $q > \frac{3}{2}$).
\end{rem}

\section{Well-posedness}
\label{sec:existence_results}

In this section, we will prove local existence and uniqueness of the initial value problem for the geodesics of the right-invariant $H^{s}$ metric on the Fr\'{e}chet-Lie group $\DiffS$, and more generally for any right-invariant weak Riemannian metric for which the inertia operator $A$ is such that
\begin{equation}\label{eq:hypo_A}
	\varphi \mapsto A_{\varphi} = R_{\varphi} \circ A \circ R_{\varphi^{-1}}, \quad \D{q} \to \mathrm{Isom}(\HH{q},\HH{q-r})
\end{equation}
is smooth, for $q > 3/2$ and $q-r \ge 0$. Under these assumptions, the metric admits a smooth spray $F_{q}$ defined on $T\D{q}$ (c.f. theorem~\ref{thm:smoothness_spray}) and we can apply the Picard-Lindel\"{o}f theorem. For each $(\varphi_{0},v_{0})\in T\D{q}$, there exists a \emph{unique non-extendable} solution
\begin{equation*}
    (\varphi,v)\in C^\infty(J_{q}(\varphi_{0},v_{0}),T\D{q}),
\end{equation*}
of the Cauchy problem
\begin{equation}\label{eq:CauchyP}
\left\{\begin{aligned}
    \varphi_{t} &= v, \\
    v_{t} &= S_{\varphi}(v),
\end{aligned}
\right.
\end{equation}
with $\varphi(0)=\varphi_{0}$ and $v(0)=v_{0}$, defined on some \emph{maximal interval of existence} $J_{q}(\varphi_{0},v_{0})$, which is open and contains $0$. Note that in general $J_{q}(\varphi_{0},v_{0})\ne \mathbb{R}$, meaning that the solutions are not \emph{global}.

To prove well-posedness of the Cauchy problem \eqref{eq:CauchyP} on the smooth manifold $T\DiffS$, we need precise regularity properties of solutions to \eqref{eq:CauchyP} on each Hilbert approximation manifold $T\D{q}$. More precisely, assume that $(\varphi_{0},v_{0})\in T\D{q+1}$. Then, we may solve~\eqref{eq:CauchyP} in $T\D{q}$ and in $T\D{q+1}$. Since solutions on each level are non-extendable, we clearly have
\begin{equation}\label{eq:triv_incl}
    J_{q+1}(\varphi_{0},v_{0})\subset J_{q}(\varphi_{0},v_{0}),
\end{equation}
which could lead to
\begin{equation*}
  \bigcap_{q} J_{q}(\varphi_{0},v_{0})=\{0\}.
\end{equation*}

The remarkable observation that the maximal interval of existence is independent of the parameter $q$, due to the right-invariance of the spray (cf. lemma~\ref{lem:noloss}) was pointed out in \cite[Theorem 12.1]{EM1970}. This makes it possible to avoid Nash-Moser type schemes to prove local existence of smooth geodesics.

\begin{lem}[No loss, nor Gain]\label{lem:noloss}
Given $(\varphi_{0},v_{0})\in T\D{q+1}$, we have
\begin{equation*}
  J_{q+1}(\varphi_{0},v_{0})= J_{q}(\varphi_{0},v_{0}),
\end{equation*}
for $q>(3/2)$ and $q-r \ge 0$.
\end{lem}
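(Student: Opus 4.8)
The claimed equality $J_{q+1}(\varphi_0,v_0)=J_q(\varphi_0,v_0)$ combined with the trivial inclusion \eqref{eq:triv_incl} reduces to showing $J_q(\varphi_0,v_0)\subset J_{q+1}(\varphi_0,v_0)$, i.e. that the solution emanating in $T\D{q}$ from initial data of class $H^{q+1}$ actually persists in $T\D{q+1}$ on the \emph{whole} interval $J_q(\varphi_0,v_0)$, without loss of the extra derivative. The guiding principle (the "no loss of derivatives'' phenomenon of Ebin--Marsden) is that the spray $F$ is \emph{right-invariant}, so the ODE it generates has a special triangular structure under differentiation in the spatial variable, and one gains control of the $H^{q+1}$-norm of $(\varphi,v)$ from the $H^q$-norm of the solution together with a linear Gronwall estimate.

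\smallskip

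\textbf{Key steps.} First I would argue by contradiction: suppose $J_{q+1}=(\,-T_-^{q+1},T_+^{q+1})$ is strictly smaller than $J_q=(\,-T_-^q,T_+^q)$, say $T_+^{q+1}<T_+^q$. On $[0,T_+^{q+1})$ the solution $(\varphi(t),v(t))$ lies in $T\D{q+1}$, and by the blow-up (maximality) criterion for ODEs on Hilbert manifolds, either $\|(\varphi(t),v(t))\|_{H^{q+1}}\to\infty$ as $t\uparrow T_+^{q+1}$, or $\varphi(t)$ degenerates (leaves $\D{q+1}$, i.e. $\inf\varphi_x\to 0$ or the chart is exited). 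But on the larger interval $[0,T_+^q)\supset[0,T_+^{q+1}]$ we already know $(\varphi,v)$ stays in $T\D{q}$; in particular $\varphi(t)$ remains a bona fide $H^q$-diffeomorphism with $\varphi_x$ bounded away from $0$ on the compact subinterval $[0,T_+^{q+1}]$, and $\|(\varphi(t),v(t))\|_{H^q}$ stays bounded there. So the only way the $T\D{q+1}$-solution can fail to extend past $T_+^{q+1}$ is a blow-up of the top-order norm $\|\varphi(t)\|_{H^{q+1}}+\|v(t)\|_{H^{q+1}}$, and it suffices to rule that out by an a priori bound on $[0,T_+^{q+1}]$.

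\smallskip

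To get that bound I would differentiate the second-order system \eqref{eq:CauchyP} once more in $x$ (or, more invariantly, work with the order-$(q{+}1)$ energy of the solution) and use that the spray $S_\varphi(v)=R_\varphi\circ S\circ R_{\varphi^{-1}}(v)$ with $S$ of order $0$ maps $\HH{q+1}\to\HH{q+1}$ boundedly, with operator norm depending only on the $\D{q}$-data (this is exactly the content of Proposition~\ref{prop:smoothness_spray} applied at level $q{+}1$, whose hypothesis is the smoothness \eqref{eq:hypo_A} of $\varphi\mapsto A_\varphi$ from Theorem~\ref{thm:main_theorem}): the key point is that the nonlinear dependence of $S_\varphi$ on $\varphi$ involves $\varphi$ only through quantities ($1/\varphi_x$, lower-order conjugation data, $A_\varphi$) that are already controlled by the $H^q$-norm of the solution on $[0,T_+^{q+1}]$, while the \emph{linear} dependence on $v$ contributes a term bounded by $C(t)\|v(t)\|_{H^{q+1}}$ with $C(t)$ continuous on the closed interval, hence bounded. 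Setting $y(t):=\|\varphi(t)\|_{H^{q+1}}+\|v(t)\|_{H^{q+1}}$ and integrating \eqref{eq:CauchyP} then yields $y(t)\le y(0)+\int_0^t C(s)\,y(s)\,ds$, so Gronwall gives $y$ bounded on $[0,T_+^{q+1}]$, contradicting blow-up. The same argument applies symmetrically to $T_-^{q+1}$, so $J_{q+1}=J_q$.

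\smallskip

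\textbf{Main obstacle.} The delicate point is establishing that the $H^{q+1}$-estimate for $S_\varphi(v)$ truly closes with constants depending only on lower-order ($H^q$) information about $(\varphi,v)$ and \emph{linearly} (not quadratically) on $\|v\|_{H^{q+1}}$ — i.e. that differentiating the right-invariant spray does not produce a genuine top-order nonlinearity. This is where right-invariance is essential: it forces the worst term in $\partial_\varphi S_\varphi$ to be $S_{1,\varphi}(v,v)$-type, which by Lemma~\ref{lem:nth_derivative} and the boundedness results of Section~\ref{sec:fourier_multipliers} is controlled at one derivative lower, and where the conjugation $R_\varphi\circ(\cdot)\circ R_{\varphi^{-1}}$ of the order-zero operator $S$ must be shown to remain bounded on $\HH{q+1}$ with norm governed by the $\D{q}$-geometry — precisely the tameness that Theorem~\ref{thm:main_theorem} provides at every level $q'>\tfrac32+r$. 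Once this structural estimate is in hand, the contradiction argument is routine ODE theory on Banach manifolds (cf. \cite[Section 10]{Ama1990}).
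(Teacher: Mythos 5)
Your overall strategy (continuation criterion plus an a priori $H^{q+1}$ bound via Gronwall) is the classical energy route and genuinely different from the paper's proof, but as written it has a real gap at exactly the point you call the ``main obstacle''. The estimate you need is
\begin{equation*}
  \norm{S_{\varphi}(v)}_{H^{q+1}} \le C\big(\norm{\varphi}_{H^{q}},\norm{1/\varphi_{x}}_{L^{\infty}},\norm{v}_{H^{q}}\big)\left(1+\norm{\varphi}_{H^{q+1}}+\norm{v}_{H^{q+1}}\right),
\end{equation*}
linear in the top norms with a constant governed only by the $\D{q}$-data. This is \emph{not} ``exactly the content of Proposition~\ref{prop:smoothness_spray} applied at level $q+1$'': that proposition (and Proposition~\ref{prop:smoothness_of_conjugates} behind it) yields smoothness, hence only \emph{local} boundedness of $(\varphi,v)\mapsto S_{\varphi}(v)$ on $\D{q+1}\times\HH{q+1}$, with constants depending on an $H^{q+1}$-neighbourhood of $(\varphi,v)$ --- which cannot rule out blow-up of the $H^{q+1}$ norm. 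To close your Gronwall loop for a general Fourier multiplier satisfying only the hypotheses of Theorem~\ref{thm:main_theorem}, you would need tame composition estimates for $R_{\varphi}$ and $R_{\varphi^{-1}}$ (standard) \emph{and} an asymmetric refinement of Lemma~\ref{lem:nth_derivative_estimates}, e.g.\ a commutator bound of the type $\norm{[A,u]v_{x}}_{H^{q+1-r}}\lesssim \norm{u}_{H^{q}}\norm{v}_{H^{q+1}}+\norm{u}_{H^{q+1}}\norm{v}_{H^{q}}$, i.e.\ a mixed-weight estimate on the multi-symbol $p_{1}$ of \eqref{eq:first_derivative_symbol}. Nothing of this sort is stated or proved in the paper; it is precisely the hard analytic content your argument tacitly assumes. (A secondary, fixable point: in a Hilbert space, boundedness of $y(t)$ alone is not a continuation criterion; you must also argue that the spray stays bounded along the orbit so that $(\varphi,v)$ converges in $H^{q+1}\times H^{q+1}$ as $t\uparrow T_{+}^{q+1}$ to a point of the open set $T\D{q+1}$ --- this does follow once the tame estimate is granted, together with the $\D{q}$-level control of $\varphi_{x}$.)

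The paper avoids all of this with a short symmetry argument: since $(\varphi_{0},v_{0})\in T\D{q+1}$, the direction $(\varphi_{0,x},v_{0,x})$ lies in $\HH{q}\times\HH{q}$, and differentiating the rotation-equivariance identity $\Phi_{q}(t,T(s)(\varphi_{0},v_{0}))=T(s)\Phi_{q}(t,(\varphi_{0},v_{0}))$ at $s=0$, using only the smoothness \eqref{eq:smooth_flow} of the flow at the \emph{fixed} level $q$, gives $(\varphi_{x}(t),v_{x}(t))=D_{(\varphi,v)}\Phi_{q}(t,(\varphi_{0},v_{0}))(\varphi_{0,x},v_{0,x})\in\HH{q}\times\HH{q}$ for every $t\in J_{q}(\varphi_{0},v_{0})$; hence the solution stays in $T\D{q+1}$ on all of $J_{q}$, and uniqueness of the level-$(q+1)$ solution does the rest. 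No energy estimate, blow-up criterion, or tame bound is needed. If you want to keep your route, you must actually prove the mixed estimates on the operators $P_{n}$ indicated above; otherwise the equivariance argument is both shorter and strictly within the paper's toolkit.
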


\begin{proof}
Let $\Phi_{q}$ be the flow of the spray $F_{q}$ and $R_{s}$ be the (right) action of the rotation group $\Circle$ on $\D{q}$, defined by
\begin{equation*}
    \big(R_{s} \cdot \varphi \big)(x): = \varphi(x+s),\quad \varphi \in \D{q},\quad x\in\Circle.
\end{equation*}
This action induces an action on $T\D{q}$ given by
\begin{equation*}
    \big(R_{s} \cdot (\varphi,v)\big)(x): = (\varphi(x+s),v(x+s)),\quad (\varphi,v)\in T\D{q},\quad x\in\Circle.
\end{equation*}
Note that if $(\varphi,v)\in T\D{q+1}$, then\footnote{We will avoid to write $TR_{s}$, $T(TR_{s})$, \dots and simply keep the notation $R_{s}$.}
\begin{equation*}
  s \mapsto R_{s} \cdot (\varphi,v), \quad \Circle \to T\D{q}
\end{equation*}
is a $C^{1}$ map, and that
\begin{equation*}
    \frac{d}{ds} R_{s}\cdot (\varphi,v) = (\varphi_{x},v_{x}).
\end{equation*}
Therefore, if $(\varphi_{0},v_{0})\in T\D{q+1}$, we get
\begin{equation}\label{eq:D_Lie_derivative}
    \left.\frac{d}{ds}\right\vert_{s=0} \Phi_{q}(t, R_{s} \cdot (\varphi_{0},v_{0})) = \partial_{(\varphi,v)}\Phi_{q}(t,(\varphi_{0},v_{0})).(\varphi_{0,x},v_{0,x}).
\end{equation}
On the other hand, the spray $F_{q}$ is invariant under each right-invariant translation $R_{\varphi}$ where $\varphi\in\D{q}$. The same is true for its flow $\Phi_{q}$, and hence
\begin{equation*}
    \Phi_{q}(t,R_{s} \cdot (\varphi_{0},v_{0})) = R_{s} \cdot \Phi_{q}(t,(\varphi_{0},v_{0}))\quad\text{for all}\quad t\in J_{q}(\varphi_{0},v_{0}),\ s\in\RR.
\end{equation*}
We get thus
\begin{equation*}
    \partial_{(\varphi,v)}\Phi_{q}(t,(\varphi_{0},v_{0})).(\varphi_{0,x},v_{0,x}) = (\varphi_{x}(t),v_{x}(t)).
\end{equation*}
But $\partial_{(\varphi,v)}\Phi_{q}(t,(\varphi_{0},v_{0})).(\varphi_{0,x},v_{0,x})$ belongs to $\HH{q}\times\HH{q}$, and hence
\begin{equation*}
    (\varphi(t),v(t))\in T\D{q+1}\quad\text{for all}\quad t\in J_{q}(\varphi_{0},v_{0}).
\end{equation*}
We conclude therefore that
\begin{equation*}
    J_{q}(\varphi_{0},v_{0}) = J_{q+1}(\varphi_{0},v_{0}),
\end{equation*}
which completes the proof.
\end{proof}

\begin{rem}\label{rem:noloss_nogain}
Lemma~\ref{lem:noloss} states that there is no loss of spatial regularity during the evolution of \eqref{eq:CauchyP}. By reversing the time direction, it follows from the unique solvability that there is also no gain of regularity in the following sense: Let $(\varphi_{0},v_{0})\in T\D{q}$ be given and assume that $(\varphi(t_{1}),v(t_{1}))\in T\D{q+1}$ for some
$t_{1}\in J_{q}(\varphi_{0},v_{0})$. Then $(\varphi_{0},v_{0})\in T\D{q+1}$.
\end{rem}

We get therefore the following local existence result.

\begin{thm}\label{thm:smooth_flow}
Let \eqref{eq:hypo_A} be satisfied and consider the geodesic flow on the tangent bundle $T\DiffS$ induced by the inertia operator $A$. Then, given any $(\varphi_{0},v_{0})\in T\DiffS$, there exists a unique non-extendable solution
\begin{equation*}
  (\varphi, v)\in C^\infty(J,T\DiffS)
\end{equation*}
of \eqref{eq:CauchyP} on the maximal interval of existence $J$, which is open and contains $0$.
\end{thm}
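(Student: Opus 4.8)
The plan is to deduce the smooth solvability on the Fr\'echet manifold $T\DiffS$ from the family of smooth flows $\Phi_q$ on the Hilbert approximations $T\D{q}$, exploiting the no-loss-of-regularity Lemma~\ref{lem:noloss}. Fix $q_{0}\in(\frac{3}{2}+r,\infty)$; by Proposition~\ref{prop:smoothness_spray} together with hypothesis~\eqref{eq:hypo_A}, the spray $F_{q}$ is a smooth vector field on $T\D{q}$ for every $q\ge q_{0}$, so the Picard--Lindel\"of theorem applies on each level and furnishes, for each $(\varphi_{0},v_{0})\in T\D{q}$, a unique non-extendable solution $(\varphi,v)\in C^{\infty}(J_{q}(\varphi_{0},v_{0}),T\D{q})$ of \eqref{eq:CauchyP} on an open interval $J_{q}(\varphi_{0},v_{0})\ni 0$.

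First I would observe that $\DiffS=\bigcap_{q>3/2}\D{q}$ and correspondingly $T\DiffS=\bigcap_{q\ge q_{0}}T\D{q}$, with the inverse-limit topology. Given $(\varphi_{0},v_{0})\in T\DiffS$, it lies in $T\D{q}$ for all $q\ge q_{0}$, so we may solve on each level. Iterating Lemma~\ref{lem:noloss} gives $J_{q+1}(\varphi_{0},v_{0})=J_{q}(\varphi_{0},v_{0})$ for every integer step, and since the $J_{q}$ are monotone decreasing in $q$ by \eqref{eq:triv_incl}, interpolating over non-integer $q$ as well, the interval $J:=J_{q}(\varphi_{0},v_{0})$ is in fact \emph{independent of} $q\ge q_{0}$. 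On this common interval the solutions at different levels coincide, again by uniqueness: the level-$q'$ solution, viewed in $T\D{q}$ for $q\le q'$, solves the same Cauchy problem and hence equals the level-$q$ solution. Therefore the curve $(\varphi,v)$ is well defined as a map $J\to T\D{q}$ for every $q\ge q_{0}$, with values in $\bigcap_{q}T\D{q}=T\DiffS$.

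Next I would upgrade this to $C^{\infty}$-regularity \emph{as a curve into $T\DiffS$}. For smooth maps into a Fr\'echet manifold built as an inverse limit of Banach manifolds it suffices to check that the composite $J\to T\D{q}$ is smooth for each $q$; but $(\varphi,v)\in C^{\infty}(J,T\D{q})$ holds by the Picard--Lindel\"of output \eqref{eq:smooth_flow} at level $q$. Since this holds simultaneously for all $q\ge q_{0}$ and the solutions agree, $(\varphi,v)\in C^{\infty}(J,T\DiffS)$. Maximality and non-extendability of $J$ are inherited: if the $T\DiffS$-solution extended past an endpoint of $J$, it would in particular extend the $T\D{q_{0}}$-solution past $\sup J_{q_{0}}$, contradicting non-extendability at level $q_{0}$. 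Uniqueness is likewise inherited, since any two $T\DiffS$-solutions are in particular $T\D{q_{0}}$-solutions with the same initial data.

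The main obstacle is the $q$-independence of the maximal interval, i.e. ruling out $\bigcap_{q}J_{q}(\varphi_{0},v_{0})=\{0\}$; this is exactly the content of Lemma~\ref{lem:noloss}, whose proof crucially uses the $\D{q}$-equivariance of the spray (right-invariance) to identify the spatial derivative of the flow with the flow's linearization in a translation direction, thereby transferring one degree of Sobolev regularity from the initial data to the solution at all later times. Once that lemma is in hand the remaining steps are soft inverse-limit bookkeeping.
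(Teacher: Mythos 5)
Your proposal is correct and follows essentially the same route as the paper, which simply invokes the level-wise smooth flows \eqref{eq:smooth_flow}, Lemma~\ref{lem:noloss} for the $q$-independence of the maximal interval, and the inverse-limit regularity argument (cited there as Lemma 8 of \cite{EK2011}) that you carry out by hand. The details you supply — monotonicity of the $J_q$, coincidence of solutions across levels by uniqueness, and inheritance of maximality and uniqueness from the level $q_{0}$ — are exactly the intended content of that citation.
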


And we obtain well-posedness of the Euler equation.

\begin{cor}\label{cor:Euler-well-posedness}
The corresponding Euler equation has for any initial data $u_{0}\in\CS$ a unique non-extendable smooth solution
\begin{equation*}
  u\in C^\infty(J,\CS).
\end{equation*}
The maximal interval of existence $J$ is open and contains $0$.
\end{cor}

It is known that the Euler equation induced by the inertia operator
\begin{equation*}
  A=\op{1+k^2}
\end{equation*}
leads to the classical periodic Camassa-Holm equation
\begin{equation}\label{eq:CH}
  u_{t} - u_{txx} + 3uu_{x} = 2u_{x}u_{xx} + uu_{xxx} ,\quad t > 0,\  x\in \Circle,
\end{equation}
cf. \cite{CK2003}. It may be interesting to briefly discuss another possible option for $A$, namely
\begin{equation*}
  A=\op{\abs{k}^{r} + \delta_{0}(k)}.
\end{equation*}
Observe that Theorem~\ref{thm:main_theorem} is applicable provided $r\ge 1$. In that case, the mapping
\begin{equation*}
	\varphi \mapsto R_{\varphi}\circ A\circ R_{\varphi^{-1}}, \quad \D{q} \to \mathcal{L}(\HH{q},\HH{q-r})
\end{equation*}
is smooth for $q>(3/2)$ and $q-r \ge 0$. Since in addition, $A$ is a topological linear isomorphism from $\HH{q}$ onto $\HH{q-r}$, the operator $A$ satisfies clearly assumption \eqref{eq:hypo_A} and thus Theorem~\ref{thm:smooth_flow} guarantees the well-possedness, in the smooth category, of the corresponding Euler equation
\begin{equation}\label{eq:frac_euler}
    m_{t} + um_{x} + 2u_xm = 0,\quad m = \mu(u) + (-\Delta)^{r/2}u
\end{equation}
where $(-\Delta)^{r/2}:=\op{\abs{k}^{r}}$ and $\mu(u):=\int_{\Circle} u$. Note that $\int_{\Circle} m\,dx$ is a conserved quantity for the evolution under \eqref{eq:frac_euler}, since $\int_{\Circle} m_{t}\,dx=0$. Equation \eqref{eq:frac_euler} is of particular interest for
the values $r=2$ and $r=1$, respectively. In the first case we get the so-called \emph{$\mu$-Hunter-Saxton equation}, cf. \cite{LMT2010,EW2012}
\begin{equation}\label{eq:mu_HS}
  u_{txx} + uu_{xxx} + 2u_{x}u_{xx} - 2\mu(u)u_{x} = 0  ,\quad t > 0,\  x\in \Circle,
\end{equation}
In the case $r=1$ we get the so-called \emph{generalized CLM equation}, cf. \cite{EW2012}
\begin{equation}\label{eq:mu_CLM}
    Hu_{tx}+ uHu_{xx} + 2\mu(u)u_{x}+  2u_{x}Hu_{x}=0 ,\quad t > 0,\  x\in \Circle,
\end{equation}
where $H=\op{i \sgn(k)}$ denotes the Hilbert transform, acting on the spatial variable $x\in\Circle$. Note that $\op{\abs{k}}=H\circ D=(-\Delta)^{1/2}$.

\section{Exponential map and minimisation problems}
\label{sec:exponential_map}

The geodesic flow $\Phi_{q}$ on the Hilbert manifold $T\D{q}$ satisfies the following remarkable property
\begin{equation*}
 \Phi_{q}(t, \varphi, \sigma v) = \Phi_{q}(\sigma t, \varphi, v),\quad \sigma >0,
\end{equation*}
which is a consequence of the quadratic nature of the geodesic spray \cite[Chapter 4]{Lan1999}. Therefore, the time one map of the flow is defined on some open set $W_{q}$ of $T\D{q}$. The \emph{exponential map} $\riem_{q}$ is defined as
\begin{equation*}
  \riem_{q}: (\varphi,v) \mapsto \pi \circ \Phi_{q}(1,\varphi,v), \quad W_{q}: \to \D{q},
\end{equation*}
where $\pi: T\D{q} \to \D{q}$ is the canonical projection. For each $\varphi \in \D{q}$, we denote by $\riem_{q,\varphi}$, the restriction of $\riem_{q}$ to the tangent space $T_{\varphi}\D{q}$. Thus
\begin{equation*}
  \riem_{q,\varphi}: T_{\varphi}\D{q} \to \D{q}.
\end{equation*}
If the spray $F_{q}$ is smooth, then $\riem_{q, \id}$ is a local diffeomorphism from a neighbourhood $V_{q}$ of the $0 \in T_{\id}\D{q}$ onto a neighbourhood $U_{q}$ of $\id \in \D{q}$ \cite[Theorem 4.1]{Lan1999}.

This last assertion is in general no longer true on a \emph{Fr\'{e}chet manifold} and in particular on $\DiffS$. One may find useful to recall on this occasion that the \emph{group exponential} of $\DiffS$ is not a local diffeomorphism \cite{Mil1984}. Moreover, the Riemannian exponential map for the $L^{2}$ metric (Burgers equation) on $\DiffS$ is not a local $C^{1}$-diffeomorphism near the origin \cite{CK2002}. Nevertheless, it has been established in \cite{CK2002}, that for the Camassa-Holm equation -- which corresponds to the Euler equation of the $H^{1}$ metric on $\DiffS$ -- and more generally for $H^k$ metrics ($k\ge 1$) (see \cite{CK2003}), the Riemannian exponential map was in fact a \emph{smooth local diffeomorphism}. This result is still true for $H^{s}$ right-invariant metrics on $\DiffS$ provided $s \in [1/2, + \infty)$.

\begin{thm}\label{thm:exponential_map}
The exponential mapping $\riem_{\id}$ for the $H^{s}$-metric on $\DiffS$ is a smooth local diffeomorphism from a neighbourhood $V$ of $0$ in $\VectS$ onto a neighbourhood $U$ of $\id$ in $\DiffS$ for each $s\ge 1/2$.
\end{thm}

The proof of theorem~\ref{thm:exponential_map} relies mainly on a linearized version of the \emph{no loss, no gain lemma}~\ref{lem:noloss}, and is stated below. The full proof is similar to the one given for \cite[Theorem 14]{EK2011} and will be omitted.

\begin{lem}\label{lem:exponential_regularity}
Let $v \in V_{q}\cap \HH{q+1}$, we have
\begin{equation*}
    T_{v} \riem_{q,\id} \left(\HH{q+1}\right) = \HH{q+1}.
\end{equation*}
\end{lem}

\begin{proof}
Let $G_{q} : W_{q} \to \D{q}\times \D{q}$ be the smooth mapping defined by
\begin{equation*}
  G_{q}(\varphi, v) := (\varphi, \riem_{q}(\varphi, v)).
\end{equation*}
From the invariance of the flow $\Phi_{q}^{t}$ under the right action of $\DiffS$, we deduce that
\begin{equation*}
  G_{q}(R_{s}\cdot(\varphi, v)) = R_{s}\cdot G_{q}(\varphi, v)
\end{equation*}
where $R_{s}$ denotes the natural action on derived spaces, induced by the action of the rotation group $\Circle$ on $\D{q}$
\begin{equation*}
    \big(R_{s}\cdot \varphi \big)(x): = \varphi(x+s).
\end{equation*}
Therefore, we have
\begin{equation}\label{eq:linearized-equivariance}
  TG_{q}.(R_{s}\cdot(\varphi, v, \delta \varphi,\delta v)) = R_{s}\cdot TG_{q}.(\varphi, v, \delta \varphi,\delta v).
\end{equation}
Now, if $(\varphi, v, \delta \varphi,\delta v) \in T^{2}\D{q+1}$, then
\begin{equation*}
  s \mapsto R_{s}\cdot (\varphi, v, \delta \varphi,\delta v), \quad \Circle \to T^{2}\D{q}
\end{equation*}
is a $C^{1}$ mapping and
\begin{equation*}
    \frac{d}{ds} R_{s}\cdot(\varphi, v, \delta \varphi,\delta v) = (\varphi_{x}, v_{x}, \delta \varphi_{x},\delta v_{x}).
\end{equation*}
Therefore, taking derivatives in $s$, at $s=0$, in equation~\eqref{eq:linearized-equivariance}, we get
\begin{multline*}
  T^{2}G_{q}.(\varphi, v, \delta \varphi,\delta v, \varphi_{x}, v_{x}, \delta \varphi_{x},\delta v_{x}) \\
  = \left(\varphi, v, \delta \varphi,\delta v, \varphi_{x}, \big(\riem_{q}(\varphi,v)\big)_{x}, \delta \varphi_{x},\big(T\riem_{q}. (\varphi, v, \delta \varphi,\delta v)\big)_{x}\right).
\end{multline*}
Since the left hand-side of the preceding equation lies in $T^{2}(\D{q}\times\D{q})$, we deduce that
\begin{equation*}
  T_{v}\riem_{q,\varphi}\left(\HH{q+1}\right) \subset \HH{q+1}
\end{equation*}
as soon as $(\varphi, v) \in \D{q+1} \times \HH{q+1}$. Since a similar statement can be made for $\left(T_{v}\riem_{q,\varphi}\right)^{-1}$, the proof of lemma~\ref{lem:exponential_regularity} is complete.
\end{proof}

\begin{rem}
Let $U$ and $V$ be the neighbourhoods introduced in theorem~\ref{thm:exponential_map}. We define
\begin{equation*}
  \mathcal{V} := \bigcup_{\varphi \in \DiffS} R_{\varphi}V,
\end{equation*}
which is an open neighbourhood of the \emph{zero section} in $T\DiffS$, and
\begin{equation*}
  \mathcal{U} := \set{(\varphi,\psi) \in \DiffS \times \DiffS;\; \psi\circ\varphi^{-1} \in U},
\end{equation*}
which is an open neighbourhood of the \emph{diagonal} in $\DiffS \times \DiffS$. One can deduce, from theorem~\ref{thm:exponential_map}, that the mapping
\begin{equation*}
  G : \mathcal{V} \to \mathcal{U}, \quad (\varphi,v) \mapsto (\varphi,\riem_{\varphi}(v))
\end{equation*}
is a smooth diffeomorphism.
\end{rem}

The restriction of $\riem_{q,\varphi}$ to $V_{q}$ defines a local chart around $\id$ in $\D{q}$. On this chart, called a \emph{normal chart}, we have \emph{local polar coordinates}, defined as follows. Given $\varphi \in U_{q} - \set{\id}$, there is a $v\in V_{q}\setminus\{0\}$ such that $\varphi = \riem_{q,\varphi}(v)$. Letting now
\begin{equation*}
    w := v/\norm{v}_{H^{s}}, \qquad \rho := \norm{v}_{H^{s}},
\end{equation*}
we have that $\varphi=\riem(\rho w)$ and $(\rho,w)$ are called the {\emph{polar coordinates}} of $\varphi\in U- \set{\id}$. Note that $(\rho,w)$ depend smoothly of $\varphi$ and that $\rho(\varphi) \to 0$ as $\varphi\to \id$.

As can be checked in \cite{Lan1999}, the following result is valid not only for a strong Riemannian metric but also for a \emph{weak} Riemannian metric, \emph{provided} there exists a compatible, symmetric covariant derivative.

\begin{lem}\label{lem:Lang_inequality}
 For a piecewise $C^{1}$ curve $\varphi : [a,b] \to U_{q} - \set{\id}$, we have the inequality
 \begin{equation}\label{eq:arc-length-bound}
    L_{s}(\gamma) \ge \abs{\rho(b) - \rho(a)},
 \end{equation}
 where
 \begin{equation*}
   L_{s}(\gamma) := \int_{a}^{b} \norm{R_{\varphi^{-1}}\varphi_{t}}_{H^{s}} \,dt .
 \end{equation*}
\end{lem}

A consequence of lemma~\ref{lem:Lang_inequality} is that the length of any path which \emph{lies inside} the normal neighbourhood is bounded below by $r := \abs{\rho(b) - \rho(a)}$. Note also that a path, of constant velocity norm, which minimizes locally the arc-length minimizes also the \emph{energy} defined as
\begin{equation*}
  E_{s}(\gamma) := \frac{1}{2}\int_{a}^{b} \norm{R_{\varphi^{-1}}\varphi_{t}}^{2}_{H^{s}} \,dt .
\end{equation*}
We get therefore the following theorem.

\begin{thm}
  Let $s \ge 1/2$. Given two nearby diffeomorphisms $\varphi, \psi \in \DiffS$, there exists a unique geodesic for the right-invariant $H^{s}$ metric on $\DiffS$, joining them and which minimizes \emph{locally} the \emph{arc-length} and the \emph{energy}.
\end{thm}

On a \emph{strong Riemannian manifold}, given two nearby points, there exists a unique geodesic, joining these two points, which minimizes \emph{globally} the arc-length and the energy. Note however, that for a \emph{weak metric}, this might not be true. Indeed, in lemma~\ref{lem:Lang_inequality}, the bound \eqref{eq:arc-length-bound} might not be true for a path which leaves the normal neighbourhood \emph{before} leaving the (weak ball) of radius $r$ defined as
\begin{equation*}
    B_{s}(\id,r) := \set{\varphi \in U;\; \rho(\varphi) \le r}.
\end{equation*}
This happens, in particular, for the critical exponent $s=1/2$ as it follows from \cite{BBHM2013}. Note also that, for the $L^{2}$ metric, the situation is even worse since the energy has no local minimum~\cite{Bru2013}. The problem wether the geodesic joining two nearby points is a \emph{global minimum} for $s > 1/2$ seems to be still an open problem.

To make this clear we close this section by a remark concerning the geodesic semi-distance $d_{s}$ induced by the $H^{s}$ metric and defined as the greatest lower bound of path-lengths $L_{s}(\varphi)$, for piecewise $C^{1}$ paths $\varphi(t)$ in $\DiffS$ joining $\varphi_{0}$ and $\varphi_{1}$. It was first shown in \cite{MM2005}, that this semi-distance vanishes identically for the $L^{2}$ right-invariant metric on the diffeomorphism group of any compact manifold. More recently, it was shown in \cite{BBHM2013} that $d_s$ vanishes identically on $\DiffS$ if $s\in[0,1/2]$, whereas $d_s$ is a distance for $s > 1/2$
\begin{equation*}
    \forall \varphi_{0},\varphi_{1} \in \DiffS, \quad \varphi_{0} \ne \varphi_{1} \; \Rightarrow \; d_{s}(\varphi_{0},\varphi_{1}) >0.
\end{equation*}
Anyway, it should be noted that lemma~\ref{lem:Lang_inequality} does not imply that the geodesic semi-distance is in fact a distance.

\section{Euler equations on homogeneous spaces}
\label{sec:homogeneous_spaces}

The theory of Euler equations on a homogeneous space $G/K$ has been developed in \cite{KM2003}. In that case, the geodesic flow for a right-invariant Riemannian metric on the homogeneous space $G/K$, can be reduced to the so called \emph{Euler-Poincar\'{e}} equation
\begin{equation}\label{eq:Euler-Poincare}
    m_{t} = ad^{*}_{u}\,m, \qquad m \in \gstar,
\end{equation}
on the dual space $\gstar$ of the Lie algebra of $G$ (see \cite{KM2003} or Poincar\'{e}'s original paper~\cite{Poi1901}). Unfortunately, there is no natural \emph{contravariant} formulation of equation~\eqref{eq:Euler-Poincare}, leading to an \emph{Euler equation}, as it is the case for a Lie group. In that case, the Eulerian velocity (defined using a lift $g(t)$ in $G$ of a path $x(t)$ in $G/K$) is only defined \emph{up to a path} in $K$ and the relation between $u$ and $m$ is not one-to-one (see \cite{TV2011} for a recent survey on this subject). Another way to treat the problem is to introduce \emph{sub-Riemannian geometry} on $G$ (see \cite{GMV2012,GMV2012a} for a deep study of this approach for $\DiffS$).

These difficulties clear away if $K$ is a normal subgroup. Indeed, in that case, the coset manifold $G/K$ is a Lie group equipped with a right-invariant Riemannian metric. But this special case is not very useful for our study, since $\DiffS$ is a \emph{simple} group: it has no nontrivial normal subgroups (see \cite{GR2007}).

Hopefully, there is another situation where a right-invariant Riemannian metric on a homogeneous space can be reduced to the ordinary theory of the Euler equation on a Lie group, namely when there exists a section of the projection map $\pi : G \to G/K$ onto a subgroup $H$ of $G$. This situation occurs for $\DiffS/\RotS$, in which case $H := \DiffSfix{1}$, the subgroup of diffeomorphisms which fix one point, and for $\DiffS/\PSL(2,\RR)$ in which case $H := \DiffSfix{3}$, the subgroup diffeomorphisms which fix three points. More precisely, we have the following.

\begin{lem}\label{lem:simple_transitive_action}
Let $G$ be a group and $H,K$ some subgroups of $G$. Suppose that
\begin{enumerate}
  \item The restriction to $H$ of the projection map $\pi : G \to G/K$ is surjective,
  \item $H \cap K = \set{e}$.
\end{enumerate}
Then $H$ acts \emph{simply} and \emph{transitively} on $G/K$.
\end{lem}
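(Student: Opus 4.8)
```latex
\begin{proof}[Proof plan for Lemma~\ref{lem:simple_transitive_action}]
The plan is to verify transitivity and freeness of the natural left action of $H$ on $G/K$ directly from the two hypotheses, using only elementary group theory. First I would make the action precise: $H$ acts on the space of right cosets $G/K = \set{Kg \,;\, g \in G}$ by $h \cdot (Kg) := Kgh^{-1}$ (or, if one prefers left cosets and right $G$-action, by $h \cdot (gK) = hgK$; I will fix one convention and stick to it). One checks this is a well-defined group action: it does not depend on the chosen representative $g$, and it respects the group law of $H$.

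\emph{Transitivity.} Let $Kg$ be an arbitrary coset. By hypothesis~(1), the restriction $\pi|_{H} : H \to G/K$ is surjective, so there exists $h \in H$ with $\pi(h) = Kh = Kg$. Hence $Kg$ lies in the $H$-orbit of the base point $Ke = K$, namely $Kg = Kh = h^{-1} \cdot K$ (with the convention above). Since $Kg$ was arbitrary, $H$ acts transitively on $G/K$.

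\emph{Freeness (simplicity of the action).} Suppose $h \in H$ fixes the base coset $K$, i.e. $Kh = K$. Then $h \in K$, so $h \in H \cap K = \set{e}$ by hypothesis~(2), whence $h = e$. Thus the stabilizer of the base point is trivial; by transitivity all stabilizers are conjugate, hence all trivial, so the action is free. Combining the two parts, $H$ acts simply transitively on $G/K$, which is the assertion.

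This lemma is entirely formal and presents no genuine obstacle; the only point requiring care is bookkeeping of conventions — left versus right cosets, and whether $G$ (hence $H$) acts on $G/K$ on the left or the right — so that ``the restriction of $\pi$ to $H$ is surjective'' translates correctly into ``the orbit of the base point is everything.'' In the applications to $\DiffS/\RotS$ and $\DiffS/\PSL(2,\RR)$, the content will be to exhibit concrete subgroups $H$ (the stabilizer of a point, respectively the stabilizer of a triple of points) satisfying (1) and (2), but that verification belongs to the subsequent sections, not to the proof of the lemma itself.
\end{proof}
```
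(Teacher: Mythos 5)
Your proof is correct and follows essentially the same route as the paper: transitivity is exactly the surjectivity of $\pi|_{H}$, and simplicity follows from the triviality of the stabilizer of the base coset $K$, since $Kh=K$ forces $h\in H\cap K=\set{e}$ (the paper, like you, reduces to the base point via transitivity). The only difference is cosmetic bookkeeping of the action convention, which you handle consistently.
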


\begin{rem}
As a result, if the hypothesis of lemma~\ref{lem:simple_transitive_action} are satisfied, then $G/K$ inherits a group structure. Note, however that the restriction of the projection $\pi : H \to G/K$ is a group morphism, if and only if, $K$ is a normal subgroup of $G$.
\end{rem}

\begin{proof}
By definition, the projection map $\pi$ sends and element $g\in G$ to the coset $Kg$. To show that the (right) action of $H$ on $G/K$ is transitive, it suffices to show that for any coset $Kg$ we can find $h \in H$ such that $Kh=Kg$. But this means precisely that $\pi : H \to G/K$ is surjective. Hence the transitivity of the action is equivalent to the surjectivity of $\pi$. To prove that the action is simple, it is enough to show that the only element $h\in H$ which fixes the coset $K$ is $h=e$, the unit element. But this means $Kh = K$, and thus $h \in K \cap H$, which leads to $h=e$ by condition (2). Note that this implies that $\pi : H \to G/K$  is injective.
\end{proof}

The mentioned scenario is summarized in the following proposition.

\begin{prop}\label{prop:homogeneous_riemannian_structure}
Suppose that $H$ and $K$ are closed subgroups of a Lie group $G$ such that $H\cap K = \set{e}$ and such that $\pi : H\to G/K$ is surjective. Let $\g$, $\mathfrak{h}$, and $\mathfrak{k}$ denote the Lie algebras of $G$, $H$ and $K$, respectively. Let $A : \g \to \gstar$ be the inertia operator of a (degenerate) non-negative inner product $\langle \cdot, \cdot \rangle$ on $\g$, which satisfies the following conditions:
\begin{equation}\label{eq:condition-kernel}
  \ker A = \mathfrak{k},
\end{equation}
and
\begin{equation}\label{eq:condition-covariance}
  \langle \Ad_{k}u, \Ad_{k}v \rangle = \langle u,v \rangle, \quad \forall k \in K, \, \forall u,v \in \g ,
\end{equation}
where $\Ad$ is the adjoint action of $G$ on its Lie algebra $\g$. Then, $A$ induces a right-invariant Riemannian metric $\gamma$ on $G/K$ and $\pi: H \to G/K$ is a Riemannian isometry between $H$, endowed with the right-invariant metric induced by $A$, and $(G/K, \gamma)$.
\end{prop}

\begin{rem}
In the situation described, we have $\g = \mathfrak{k} \oplus \mathfrak{h}$ and $\mathfrak{h}^{*}$ can be identified with
\begin{equation*}
    \mathfrak{k}^{0} = \set{m \in \gstar ; \; (m,w) = 0, \, \forall w \in \mathfrak{k}}.
\end{equation*}
Now, condition~\eqref{eq:condition-covariance} implies\footnote{If the subgroup $K$ is connected, \eqref{eq:condition-covariance} and \eqref{eq:condition-covariance-infinitesimal} are equivalent.}
\begin{equation}\label{eq:condition-covariance-infinitesimal}
    \langle \ad_{w}u,v \rangle = -\langle u,\ad_{w}v \rangle, \quad \forall w \in \mathfrak{k}, \, \forall u,v \in \g.
\end{equation}
Therefore, we have
\begin{equation*}
    \left( \ad^{*}_{u} A(v), w \right) = - \left( \ad^{*}_{v} A(u), w \right), \quad \forall w \in \mathfrak{k}, \, \forall u,v \in \g,
\end{equation*}
where $\ad^{*}$ is the coadjoint action of $\g$ on $\gstar$, defined by
\begin{equation*}
  (\ad^{*}_{u}m, v) = -(m,[u,v]), \quad u,v \in \g, \, m \in \gstar.
\end{equation*}
Hence
\begin{equation*}
    \ad^{*}_{u}A(v) + \ad^{*}_{v}A(u) \in \mathfrak{k}^{0} = \range A
\end{equation*}
and Arnold's bilinear operator
\begin{equation*}
    B(u,v) = \frac{1}{2}A^{-1}\Big[ \ad^{*}_{u}A(v) + \ad^{*}_{v}A(u) \Big]
\end{equation*}
is well-defined as a mapping from $\mathfrak{h} \times \mathfrak{h}$ to $\mathfrak{h}$. The Euler equation on $\mathfrak{h}$ is given by
\begin{equation}\label{eq:homogeneous_Euler}
    u_{t} = - B(u,u) = - A^{-1}\Big[ \ad^{*}_{u}A(u) \Big].
\end{equation}
\end{rem}

In the next subsections, we will extend our main theorems to some geodesic equations on $\DiffS/\RotS$ and $\DiffS/\PSL(2,\RR)$. Since the proofs are very similar to what has been done so far, we will not exhibit all the details but only point out crucial changes.


\subsection{Euler equations on the coadjoint orbit $\DiffS/\RotS$}
\label{subsec:S1}

Let $\RotS$ denotes the subgroup of all rigid rotations of $\Circle$ and
\begin{equation*}
    \DiffS/\RotS,
\end{equation*}
be the corresponding homogeneous space of right cosets. Let $\DiffSfix{1}$ be the subgroup of $\DiffS$ consisting of all diffeomorphisms of $\Circle$ which fix one arbitrarily point (say $x_{0}$). It is easy to check that the conditions of lemma~\ref{lem:simple_transitive_action} are satisfied and hence that the canonical projection
\begin{equation*}
    \DiffSfix{1} \to \DiffS/\RotS
\end{equation*}
is a bijection. The group $\DiffSfix{1}$ is a Fr\'{e}chet Lie group and we can use the Fr\'{e}chet manifold structure of $\DiffSfix{1}$ to endow the quotient space $\DiffS/\RotS$ with a Fr\'{e}chet manifold structure, so that the canonical projection becomes a diffeomorphism. The Lie algebras of $\DiffSfix{1}$ and $\RotS$ are given by
\begin{equation*}
    \CSfix{1}:=\{u\in\CS \,;\, u(x_{0})=0 \} \quad \text{and} \quad \RR\cdot w_{0},
\end{equation*}
respectively, where $w_{0}$ stands for the constant function with value $1$.

$\DiffSfix{1}$ is an ILH space; a Hilbert approximation being given by the Hilbert manifolds
\begin{equation*}
    \Dfix{1}{q}:=\{\varphi\in \D{q}\,;\,\varphi(x_{0})=x_{0}\},
\end{equation*}
modelled on the Hilbert spaces
\begin{equation*}
    \HHfix{1}{q}:=\{u\in \HH{q}\,;\, u(x_{0})=0\}.
\end{equation*}
Note that $\Dfix{1}{q}$ is a closed submanifold of the Hilbert manifold $\D{q}$ and a closed topological subgroup of $\D{q}$, for $q > 3/2$.

Let $A=\op{p(k)}$ be a $L^{2}$-symmetric, Fourier multiplier on $\CS$ and assume that its symbol satisfies
\begin{equation*}
    p(k)=0\quad \text{iff}\; k=0,
\end{equation*}
which is equivalent to $\ker A=\RR\cdot w_{0}$. We have $\ad_{w_{0}} = -D$, so that condition~\eqref{eq:condition-covariance-infinitesimal} is satisfied. Since $\RotS$ is connected, hypotheses of proposition~\ref{prop:homogeneous_riemannian_structure} are fulfilled. If $A$ is of order $r\ge 1$, then $A$ extends to $\HHfix{1}{q}$, for all $q> 3/2$, and
\begin{equation*}
    A \in \mathrm{Isom}(\HHfix{1}{q},\HHhat{1}{q-r}),
\end{equation*}
where
\begin{equation*}
    \HHhat{1}{q-r}:=\set{m\in\HH{q-r}\,;\, \hat{m}(0)=0}.
\end{equation*}
Then, for each $\varphi \in \Dfix{1}{q}$, $A$ induces a positive inner product on each tangent space, $T_{\varphi}\Dfix{1}{q}$, with flat map
\begin{equation*}
    \tilde{A}_{\varphi} = \varphi_{x}A_{\varphi} \in \mathrm{Isom}(\HHfix{1}{q},\HHhat{1}{q-r}).
\end{equation*}
Note that $\tilde{A}(T\Dfix{1}{q}) = \Dfix{1}{q} \times \HHhat{1}{q-r}$ and the proof of theorem~\ref{thm:rot_theorem} is similar to that of theorem~\ref{thm:smoothness_spray} and will be omitted.

\begin{thm}\label{thm:rot_theorem}
Let $A=\op{p(k)}$ be a $L^{2}$-symmetric, non negative, Fourier multiplier of order $r\ge 1$, satisfying
\begin{equation}\label{symbol_rot}
    p(k)=0\ \Longleftrightarrow\  k=0.
\end{equation}
Assume, in addition, that
\begin{equation*}
	\varphi \mapsto A_{\varphi} = R_{\varphi} \circ A \circ R_{\varphi^{-1}}, \quad \D{q} \to \mathcal{L}(\HH{q},\HH{q-r})
\end{equation*}
is smooth for $q > 3/2$ and $q-r \ge 0$. Then, the induced right-invariant metric on $\Dfix{1}{q}$ is smooth and has a smooth spray. Moreover, given any $(\varphi_{0},v_{0})\in T\DiffSfix{1}$, there exists a unique non-extendable solution
\begin{equation*}
    (\varphi, v)\in C^\infty(J,T\DiffSfix{1})
\end{equation*}
of the Cauchy problem for the associated geodesic spray on the maximal interval of existence $J$.
\end{thm}

We briefly discuss two special instances, namely
\begin{equation*}
  A = \op{k^{2}} \quad \text{and} \quad A=\op{\abs{k}}.
\end{equation*}

In the first case, $A=\op{k^{2}}$, we get the periodic \emph{Hunter-Saxton equation} (HS), see \cite{HS1991,Yin2004,BC2005,Len2007},
\begin{equation}\label{eq:HS}
    u_{txx} + 2u_{x}u_{xx} + uu_{xxx}  = 0.
\end{equation}

When $A=\op{\abs{k}}$, we get the \emph{Constantin-Lax-Majda equation} (CLM), see \cite{CLM1985,Wun2010,EKW2012},
\begin{equation}\label{eq:CLM}
    \partial_{t}(Hu_{x})+uHu_{xx}+2u_{x}Hu_{x}=0,
\end{equation}
where $H=\op{i \sgn(k)}$ denotes the Hilbert transform, acting on the spatial variable $x\in\Circle$.

Clearly, both symbols $(k^{2})_{k\in\ZZ}$ and $(\vert k\vert)_{k\in\ZZ}$ satisfy \eqref{symbol_rot}. Moreover, they also fulfill the hypotheses of theorem~\ref{thm:main_theorem}, so that theorem~\ref{thm:rot_theorem} is applicable to both \eqref{eq:HS} and \eqref{eq:CLM}.


\subsection{Euler equations on the coadjoint orbit $\DiffS/\PSL(2,\RR)$}
\label{subsec:PSL2}

Let $\PSL(2,\RR)$ denotes the subgroup of all rigid M\"{o}bius transformations which preserves the circle $\Circle$ and let
\begin{equation*}
    \DiffS/\PSL(2,\RR),
\end{equation*}
be the corresponding homogeneous space of right cosets. Let $\DiffSfix{3}$ be the subgroup of $\DiffS$ consisting of all diffeomorphisms of $\Circle$ which fix 3 arbitrary distinct points (say $x_{0},x_{1},x_{2}$). Then $\PSL(2,\RR) \cap \DiffSfix{3} = \set{e}$ and the canonical projection
\begin{equation*}
    \DiffSfix{3} \to \DiffS/\PSL(2,\RR)
\end{equation*}
is a bijection. The group $\DiffSfix{3}$ is a Fr\'{e}chet Lie group and we can use this Fr\'{e}chet structure to endow the quotient space $\DiffS/\PSL(2,\RR)$ with a Fr\'{e}chet manifold structure. In that way, the canonical projection becomes a diffeomorphism. The Lie algebras of $\DiffSfix{3}$ is given by
\begin{equation*}
    \CSfix{3}:=\{u\in\CS \,;\, u(x_{0})=0,\, u(x_{1})=0,\, u(x_{2})=0 \},
\end{equation*}
whereas the Lie algebra $\psl(2,\RR) \subset \CS$ of $\PSL(2,\RR)$ is the 3-dimensional subalgebra of $\CS$, generated by
\begin{equation*}
    w_{0}(x) := 1, \quad w_{1}(x) := \cos(x), \quad w_{-1}(x) := \sin(x).
\end{equation*}

An ILH structure on $\DiffSfix{3}$ is given by the Hilbert manifolds
\begin{equation*}
    \Dfix{3}{q}:=\set{\varphi\in \D{q};\; \varphi(x_{0})=x_{0},\, \varphi(x_{1})=x_{1},\, \varphi(x_{2})=x_{2}},
\end{equation*}
modelled on the Hilbert spaces
\begin{equation*}
    \HHfix{3}{q}:=\{u\in \HH{q}\,;\, u(x_{0})=0,\, u(x_{1})=0,\, u(x_{2})=0\}.
\end{equation*}
Note that $\Dfix{3}{q}$ is a closed submanifold of the Hilbert manifold $\D{q}$ and a topological subgroup of $\D{q}$, for $q > 3/2$.

\begin{thm}\label{thm:PSL2_theorem}
Let $A=\op{p(k)}$ be a $L^{2}$-symmetric, non negative, Fourier multiplier of order $r\ge 1$, satisfying
\begin{equation}\label{symbol_PSL2}
    p(k) = 0 \Longleftrightarrow  k \in \set{-1,0,1},
\end{equation}
and
\begin{equation*}
    \langle \ad_{w}u,v \rangle = -\langle u,\ad_{w}v \rangle, \quad \forall w \in \psl(2,\RR), \, \forall u,v \in \CS.
\end{equation*}
Assume, in addition, that
\begin{equation*}
	\varphi \mapsto A_{\varphi} = R_{\varphi} \circ A \circ R_{\varphi^{-1}}, \quad \D{q} \to \mathcal{L}(\HH{q},\HH{q-r})
\end{equation*}
is smooth for $q > 3/2$ and $q-r \ge 0$. Then, the induced right-invariant metric on $\Dfix{3}{q}$ is smooth and has a smooth spray. Moreover, given any $(\varphi_{0},v_{0})\in T\DiffSfix{3}$, there exists a unique non-extendable solution
\begin{equation*}
    (\varphi, v)\in C^\infty(J,T\DiffSfix{3})
\end{equation*}
of the Cauchy problem for the associated geodesic spray on the maximal interval of existence $J$.
\end{thm}

\begin{proof}
The proof is similar to that of theorem~\ref{thm:smoothness_spray}, except for point (c). Note that $A$ extends to $\HHfix{3}{q}$, for all $q> 3/2$, and that
\begin{equation*}
    A \in \mathrm{Isom}(\HHfix{1}{q},\HHhat{3}{q-r}),
\end{equation*}
where
\begin{equation*}
    \HHhat{3}{q-r}:=\set{m\in\HH{q-r}\,;\, \hat{m}(0)=0,\, \hat{m}(1)=0,\, \hat{m}(-1)=0}.
\end{equation*}
Let
\begin{equation*}
	\tilde{A}: \Dfix{3}{q}\times \HHfix{3}{q} \to \Dfix{3}{q}\times \HH{q-r}, \quad (\varphi, v) \to  (\varphi, \varphi_{x}A_{\varphi}v).
\end{equation*}
Given $\varphi \in \Dfix{3}{q}$, we have
\begin{equation*}
    \tilde{A}_{\varphi}(\HHfix{3}{q}) = \set{m \in \HH{q-r};\; \varphi_{x}(m\circ\varphi) \in \HHhat{3}{q-r}},
\end{equation*}
so we cannot conclude immediately that $\tilde{A}(T\Dfix{3}{q})$ is a trivial bundle, as this was the case for $\tilde{A}(T\D{q})$ and $\tilde{A}(T\Dfix{1}{q})$. To overcome this difficulty, we first remark that $\tilde{A}$ is a vector bundle morphism. Moreover, the continuous linear map $\tilde{A}_{\varphi}: \HHfix{3}{q} \to \HH{q-r}$ is injective and \emph{splits}\footnote{If $E,F$ are Banach spaces and $\Lambda : E \to F$ is a continuous linear map, which is injective, then we say that $\Lambda$ \emph{splits} if $\Lambda(E)$ is \emph{closed} and \emph{complemented} in $F$ (i.e there exists a closed subspace $G$ of $F$ such that $F = \Lambda(E) \oplus G$). Note that if $F$ is a Hilbert space, then every injective, continuous linear map with closed range, splits.} because $\tilde{A}_{\varphi}(\HHfix{3}{q})$ is a closed subspace of $\HH{q-r}$, for every $\varphi \in \Dfix{3}{q}$. Then according to \cite[Proposition 3.1]{Lan1999}, $\tilde{A}(T\Dfix{3}{q})$ is a subbundle of $\Dfix{3}{q}\times \HH{q-r}$ which is isomorphic to $\Dfix{3}{q}\times \HHfix{3}{q-r}$. Finally, an argument similar to the one given in point (c) of the proof of theorem~\ref{thm:smoothness_spray} does apply and achieves the proof.
\end{proof}

An important application of Theorem~\ref{thm:PSL2_theorem} is the Euler-Weil-Petersson equation, which corresponds to the inertia operator
\begin{equation*}
    A := HD(D^{2}+1) = \op{\abs{k}(k^{2}-1)}.
\end{equation*}
This equation has been related with the Weil-Petersson metric on the universal Teichm\"{u}ller space $T(1)$ in \cite{NS1995,TT2006}. The corresponding geodesic flow has been extensively studied in \cite{Gay2009}. One of the main results of this paper is that the inertia operator $A$ defines on a suitable replacement\footnote{$\D{s}$, the space of homeomorphisms of class $H^{s}$ as well as their inverse is a topological group, if and only if, $s>3/2$.} for the ``diffeomorphism group of class $H^{3/2}$ '', a right-invariant \emph{strong Riemannian structure} which is, moreover, geodesically complete.

Our point of view is completely different because we are interested in the geodesic flow for the right-invariant metric on the Fr\'{e}chet Lie group $\DiffSfix{3}$ and its Hilbert approximation $\D{s}$ for $s >3/2$. The price to pay is the fact that the metric only defines a \emph{weak Riemannian structure}. Nevertheless, theorem~\ref{thm:PSL2_theorem} applies in this case. Indeed, $A$ satisfies the hypothesis of theorem~\ref{thm:main_theorem} and all conditions of theorem~\ref{thm:PSL2_theorem}. This proves short time existence of geodesics on $\DiffSfix{3}$, which doesn't seem to be a consequence of the results in \cite{Gay2009}.

\appendix

\section{Fourier multipliers}
\label{sec:fourier_multipliers}

In this Appendix, we recall and establish some basic results about \emph{Fourier multipliers}. Here and in the following, we use the notation
\begin{equation*}
	\e_{k}(x) = \exp(2\pi i k x),
\end{equation*}
for $k\in \ZZ$ and $x\in \Circle$.

\begin{lem}\label{lem:FOp}
Let $A$ be a \emph{continuous} linear operator on the Fr\'{e}chet space $\CSC$. Then the following three conditions are equivalent:
\begin{enumerate}
  \item $A$ commutes with all rotations $R_{s}$.
  \item $[A,D] = 0$, where $D = d/dx$.
  \item For each $k \in \ZZ$, there is a $a(k)\in\mathbb{C}$ such that $A\e_{k} = a(k)\e_{k}$.
\end{enumerate}
In that case, we say that $A$ is a \emph{Fourier multiplier}.
\end{lem}

Since every smooth function on the unit circle $\Circle$ can be represented by
its Fourier series, we get that
\begin{equation}\label{eq:Fourier_series}
    (Au)(x) = \sum_{k\in\ZZ} a(k)\hat u(k) \e_{k}(x),
\end{equation}
for every Fourier multiplier $A$ and every $u\in \CS$, where
\begin{equation*}
    \hat u(k):= \int_{\Circle} u(x) \overline{\e_{k}(x)}\,dx,
\end{equation*}
stands for the $k$-th Fourier coefficients of $u$.
The sequence $a : \ZZ\to\CC$ is called the \emph{symbol} of $A$. We use also the notation $A:=\op{a(k)}$ for the Fourier multiplier induced by the sequence $a$.

\begin{proof}
Given $s \in \RR$ and $u \in \CS$, let $u_{s}(x) := u(x+s)$. If $A$ commutes with translations we have
\begin{equation*}
    (Au)_{s}(x) = (Au_{s})(x).
\end{equation*}
Taking the derivative of both
sides
of this equation with respect to $s$ at $0$ and using the continuity of $A$, we get $DAu = ADu$ which proves the implication $(1) \Rightarrow (2)$.

If $[A,D] = 0$, then both $A\e_{k}$ and $\e_{k}$ are solutions of the linear differential equation $u^{\prime} = (-2\pi i k) u$ and are therefore equal, up to a multiplicative constant $a(k)$. This proves that $(2) \Rightarrow (3)$.

If $A\e_{k} = a(k)\e_{k}$, for each $k \in \ZZ$ and $A$ is continuous, then we have representation~\eqref{eq:Fourier_series}. Therefore
\begin{align*}
    (Au)_{s}(x) & = \sum_{k\in\ZZ} a(k)\hat u(k) \e_{k} (x + s) \\
        & = \sum_{k\in\ZZ} a(k)\widehat{u_{s}}(k) \e_{k}(x) = (Au_{s})(x),
\end{align*}
which proves that $(3) \Rightarrow (1)$.
\end{proof}

\begin{rem}\label{rem:L2_symmetry}
The space of Fourier multipliers is a \emph{commutative subalgebra} of the algebra of linear operators on $\CSC$. It contains all linear differential operators with constant coefficients. Note that a Fourier multiplier $A$ is $L^{2}$-symmetric iff its symbol $a$ is real.
\end{rem}

Let $I_{n}:=\set{1, \dotsc , n}$. Given a function $f$ and $n \ge 1$, we introduce
\begin{equation*}
  S_{f,n}(m_{0}, m_{1}, \dotsc , m_{n}) := m_{0} \sum_{p=0}^{n} (-1)^{p} \sum_{\substack{J \subset I_{n}, \\ \abs{J} = p}} f_{n}\big(m_{0} + \sum_{j \in J} m_{j}\big),
\end{equation*}
where $f_{n}(\xi) := \xi^{n-1}f(\xi)$ and $m_{j} \in \RR$, for $0 \le j \le n$.

\begin{lem}\label{lem:recursive-relation}
For each $n \ge 1$, we have
\begin{multline}\label{eq:recursive-relation}
  S_{f,n+1}(m_{0}, m_{1}, \dotsc , m_{n+1}) = \\
    \sum_{k=0}^{n} m_{k} \Big(  S_{f,n}(m_{0}, \dotsc , m_{n}) - S_{f,n}(m_{0}, \dotsc ,  m_{k} + m_{n+1}, \dotsc ,  m_{n})\Big).
\end{multline}
\end{lem}

\begin{proof}
Using the definition of $S_{f,n}$, the right hand-side of~\eqref{eq:recursive-relation} writes as
\begin{multline*}
  \sum_{p=0}^{n} (-1)^{p} \sum_{\substack{J \subset I_{n}, \\ \abs{J} = p}} \left\{ m_{0}^{2} f_{n}\big(m_{0} + \sum_{j \in J} m_{j}\big) - m_{0}(m_{0} + m_{n+1}) f_{n}\big(m_{0} + \sum_{j \in J} m_{j}\big) \right.\\
  + \left.\sum_{k=1}^{n} m_{0}m_{k} \Big[ f_{n}\big(m_{0} + \sum_{j \in J} m_{j}\big) - f_{n}\big(m_{0} + \sum_{j \in J} m_{j} + \delta_{J}(k) \, m_{n+1} \big)\Big] \right\},
\end{multline*}
which can be recast as
\begin{multline*}
  m_{0}\sum_{p=0}^{n} (-1)^{p} \sum_{\substack{J \subset I_{n}, \\ \abs{J} = p}} \left\{ \big(m_{0} + \sum_{j \in J} m_{j}\big) f_{n}\big(m_{0} + \sum_{j \in J} m_{j}\big) \right. \\
  - \left. \big(m_{0} + m_{n+1} + \sum_{j \in J} m_{j}\big) f_{n}\big(m_{0} + m_{n+1} + \sum_{j \in J} m_{j}\big)\right\},
\end{multline*}
and which is equal to
\begin{equation*}
  m_{0}\sum_{p=0}^{n} (-1)^{p} \sum_{\substack{J \subset I_{n}, \\ \abs{J} = p}} \left\{ f_{n+1}\big(m_{0} + \sum_{j \in J} m_{j}\big) - f_{n+1}\big(m_{0} + m_{n+1} + \sum_{j \in J} m_{j}\big) \right\},
\end{equation*}
since $f_{n+1}(\xi) = \xi f_{n}(\xi)$. Therefore, the right hand-side of~\eqref{eq:recursive-relation} is equal to
\begin{multline*}
    m_{0}\sum_{p=0}^{n} (-1)^{p} \sum_{\substack{J \subset I_{n+1}, \\ \abs{J} = p,\, n+1 \notin J} }
    f_{n+1}\big(m_{0} + \sum_{j \in J} m_{j}\big) \\
    + m_{0} \sum_{p=0}^{n} (-1)^{p+1} \sum_{\substack{J \subset I_{n+1}, \\ \abs{J} = p+1,\, n+1 \in J} } f_{n+1} \big(m_{0} + \sum_{j \in J} m_{j}\big),
\end{multline*}
which is exactly
\begin{equation*}
  S_{f,n+1}(m_{0}, m_{1}, \dotsc , m_{n+1}).
\end{equation*}
\end{proof}

\begin{lem}\label{lem:nth_derivative_symbol}
Let $A = \op{a(k)}$ be a Fourier multiplier on $\CS$, and let $(A_{n})$ be the sequence of multilinear operators defined inductively in lemma~\ref{lem:nth_derivative}. Then, for each $n \ge 1$, we have
\begin{equation}\label{eq:nth_derivative_symbol}
    A_{n}(\e_{m_{0}}, \dotsc ,\e_{m_{n}}) = a_{n}(m_{0}, m_{1}, \dotsc , m_{n}) \e_{m_{0} + m_{1} \dotsb + m_{n}},
\end{equation}
where
\begin{equation}\label{eq:sequence-an}
    a_{n}(m_{0}, m_{1}, \dotsc , m_{n}) = (2 \pi i )^{n} S_{a,n}(m_{0}, m_{1}, \dotsc , m_{n}).
\end{equation}
\end{lem}

\begin{rem}
For $n=1$, we have
\begin{equation*}
    a_{1}(m_{0}, m_{1}) = (2 \pi i) m_{0} \Big[ a(m_{0}) - a(m_{0} + m_{1}) \Big]
\end{equation*}
and for $n=2$, we get
\begin{multline*}
    a_{2}(m_{0}, m_{1}, m_{2}) = (2 \pi i)^{2} m_{0} \Big[ (m_{0} + m_{1} + m_{2}) a(m_{0} + m_{1} + m_{2}) \\
        - (m_{0} + m_{1}) a(m_{0} + m_{1}) - (m_{0} + m_{2}) a(m_{0} + m_{2}) + m_{0} a(m_{0}) \Big].
\end{multline*}
\end{rem}

\begin{proof}
From relation~\eqref{eq:recurrence_relation}, we obtain~\eqref{eq:nth_derivative_symbol} by induction, where the sequence $(a_{n})$ satisfies
\begin{multline}\label{eq:induction-an}
    a_{n+1}(m_{0}, \dotsc , m_{n+1}) = \\
    (2 \pi i) \sum_{k=0}^{n} m_{k} \Big[ a_{n}(m_{0}, \dotsc , m_{n}) - a_{n}(m_{0}, \dotsc ,  m_{k} + m_{n+1}, \dotsc ,  m_{n})\Big].
\end{multline}
For $n=1$, relation~\eqref{eq:sequence-an} is clear. Now, suppose inductively that \eqref{eq:sequence-an} holds for some $n\ge 1$. Then, using~\eqref{eq:induction-an}, we get
\begin{multline*}
  a_{n+1}(m_{0}, \dotsc , m_{n+1}) = \\
  (2 \pi i)^{n+1} \sum_{k=0}^{n} m_{k} \Big[ S_{a,n}(m_{0}, \dotsc , m_{n}) - S_{a,n}(m_{0}, \dotsc ,  m_{k} + m_{n+1}, \dotsc ,  m_{n})\Big],
\end{multline*}
which is equal to
\begin{equation*}
  (2 \pi i )^{n+1} S_{a,n+1}(m_{0}, m_{1}, \dotsc , m_{n+1}),
\end{equation*}
by virtue of lemma~\ref{lem:recursive-relation}. This achieves the proof.
\end{proof}

Recall that a function $f: \RR \to \RR$ is \emph{absolutely continuous} if $f$ has a derivative almost everywhere, the derivative is locally Lebesgue integrable and
\begin{equation*}
    f(b) = f(a) + \int_{a}^{b} f^{\prime}(\tau)\,d\tau ,
\end{equation*}
for all $a,b \in \RR$.

\begin{lem}\label{lem:n_order_estimate}
Let $f : \RR \to \RR$ and $n\ge 1$. Suppose that $f_{n}(\xi):= \xi^{n-1}f(\xi)$ is of class $C^{n-1}$, that $f_{n}^{(n-1)}$ is \emph{absolutely continuous} and that there exists $C_{n}>0$ and $r \ge 1$ such that
\begin{equation}\label{eq:fn_estimate}
    \abs{f_{n}^{(n)}(\xi)} \le C_{n} (1+\xi^{2})^{(r -1)/2},
\end{equation}
almost everywhere. Then
\begin{equation*}
    \abs{S_{f,n}(m_{0}, m_{1}, \dotsc , m_{n})} \le C_{n} \left(\prod_{j=0}^{n}\abs{m_{j}}\right) \sum_{J \subset I_{n}} \left(1 + \left( m_{0} + \sum_{j \in J} m_{j} \right)^{2}\right)^{(r -1)/2},
\end{equation*}
for all $m_{0}, m_{1}, \dotsc , m_{n}\in \RR$.
\end{lem}

\begin{proof}
Fix $n \ge 1$ and $m_{0}, m_{1}, \dotsc , m_{n} \in \RR$. Let $g_{k}$ be the sequence of functions defined inductively by
\begin{equation*}
    g_{0}(\xi) = f_{n}(\xi), \quad g_{k+1}(\xi) = g_{k}(\xi) - g_{k}(\xi + m_{n-k}),
\end{equation*}
for $k = 0, \dotsc , n-1$. We have in particular
\begin{equation*}
    S_{f,n}(m_{0}, m_{1}, \dotsc , m_{n}) = m_{0}\,g_{n}(m_{0}).
\end{equation*}
Let $K_{0} = \set{m_{0}}$ and for $p = 1, \dotsc , n$, let $K_{p}$ be the convex set generated by $K_{p-1}$ and $K_{p-1} + m_{p}$. Note that $K_{n}$ is the convex hull of the points $m_{0} + \sum_{j \in J} m_{j}$, for all subset $J$ of $\set{1, \dotsc , n}$. Let
\begin{equation*}
  M:= \max_{\xi \in K_{n}} (1+\xi^{2})^{(r -1)/2}.
\end{equation*}
By hypothesis, we have $\abs{g_{0}^{(n)}(\xi)} \le C_{n}M$ almost everywhere on $K_{n}$, and using the mean value theorem, we get inductively
\begin{equation*}
    \abs{g_{k}^{(n-k)}(\xi)} \le C_{n}M \abs{m_{n}} \dotsm \abs{m_{n-k+1}}, \qquad \forall \xi \in K_{n-k},
\end{equation*}
for $k = 1, \dotsc n$. In particular, we have
\begin{equation*}
    \abs{g_{n}(m_{0})} \le C_{n}M \prod_{j=1}^{n}\abs{m_{j}}.
\end{equation*}
Let's now estimate $M$. For $r \ge 1$, the function $\xi \mapsto (1+ \xi^{2})^{(r -1)/2}$ has no local maximum on $\RR$ (or is constant). Thus, it attains its maximum on the compact, convex set $K_{n}$ at some extremal point $m_{0} + \sum_{j \in J_{0}} m_{j}$ and we get
\begin{align*}
  M & = \left(1 + \left( m_{0} + \sum_{j \in J_{0}} m_{j} \right)^{2}\right)^{(r -1)/2} \\
    & \le \sum_{J \subset I_{n}} \left(1 + \left( m_{0} + \sum_{j \in J} m_{j} \right)^{2}\right)^{(r -1)/2},
\end{align*}
which achieves the proof.
\end{proof}

\begin{cor}\label{cor:nth_derivative_estimates}
Let $A = \op{a(k)}$ be a Fourier multiplier of order $r \ge 1$. Suppose that $a$ satisfies the hypothesis of lemma~\ref{lem:n_order_estimate}. Then, each $A_{n}$ extends to a bounded multilinear operator
\begin{equation*}
    A_{n} \in \mathcal{L}^{n+1}\left( \HH{q},\HH{q-r} \right)
\end{equation*}
when $q>3/2$ and $q - r \ge 0$.
\end{cor}

\begin{proof}
Given smooth functions $u_{0}, \dotsc ,u_{n}$, we have
\begin{equation*}
  \norm{A_{n}(u_{0}, \dotsc ,u_{n})}_{H^{q-r}}^{2} = \sum_{k\in \ZZ} (1+k^{2})^{q-r} \abs{\hat{A}_{n}(k)}^{2},
\end{equation*}
where
\begin{align*}
  \hat{A}_{n}(k) & := <A_{n}(u_{0}, \dotsc ,u_{n}),e_{k}>_{L^{2}} \\
        & \phantom{:} = \sum_{m_{0} + \dotsb + m_{n} = k}  a_{n}(m_{0}, \dotsc , m_{n}) \,  \widehat{u_{0}}(m_{0}) \dotsm \widehat{u_{n}}(m_{n}).
\end{align*}
If we assume, moreover, that $a$ satisfies the hypothesis of lemma~\ref{lem:n_order_estimate}, we get
\begin{equation*}
  \abs{\hat{A}_{n}(k)}^{2} \lesssim \sum_{J\subset I_{n}} \left(\hat{A}_{n,J}(k)\right)^{2}
\end{equation*}
where
\begin{multline*}
  \hat{A}_{n,J}(k) := \sum_{m_{0} + \dotsb + m_{n} = k} \Big(1+\big( m_{0} + \sum_{j \in J} m_{j}\big)^{2}\Big)^{(r-1)/2}
  \\
  \abs{\widehat{u_{0}^{\prime}}(m_{0})} \dotsm \abs{\widehat{u_{n}^{\prime}}(m_{n})}.
\end{multline*}
Recall now, that, for smooth functions on $\CS$, we have
\begin{equation*}
    \widehat{v_{0} \dotsm v_{n}}(k) = \sum_{m_{0} + \dotsb + m_{n} = k} \widehat{v_{0}}(m_{0}) \dotsm \widehat{v_{n}}(m_{n}).
\end{equation*}
Therefore, $\hat{A}_{n,J}(k)$ is the $k$-th Fourier coefficient of
\begin{equation*}
  \Lambda^{r-1}\left(\tilde{u}_{0}\prod_{j \in J}\tilde{u}_{j}\right)\prod_{j \in I_{n} \setminus J}\tilde{u}_{j},
\end{equation*}
where $\Lambda^{s} := \op{(1+k^{2})^{s/2}}$ and $\tilde{u}_{j}$ is the smooth function with Fourier coefficients
\begin{equation*}
  \widehat{\tilde{u}_{j}}(k) = \abs{\widehat{u_{j}^{\prime}}(k)}.
\end{equation*}
Thus
\begin{equation*}
  \norm{A_{n}(u_{0}, \dotsc ,u_{n})}_{H^{q-r}}^{2} \lesssim \sum_{J\subset I_{n}} \norm{\Lambda^{r-1}\left(\tilde{u}_{0}\prod_{j \in J}\tilde{u}_{j}\right)\prod_{j \in I_{n} \setminus J}\tilde{u}_{j}}_{H^{q-r}}^{2}.
\end{equation*}
Now, because $q-1 > 1/2$ and $0 \le q-r \le q-1$, we deduce from lemma~\ref{lem:pointwise_multiplication} that
\begin{align*}
  \norm{A_{n}(u_{0}, \dotsc ,u_{n})}_{H^{q-r}}^{2} & \lesssim \sum_{J\subset I_{n}} \norm{\Lambda^{r-1}\left(\tilde{u}_{0}\prod_{j \in J}\tilde{u}_{j}\right)}_{H^{q-r}}^{2} \norm{\prod_{j \in I_{n} \setminus J}\tilde{u}_{j}}_{H^{q-1}}^{2}
  \\
  & \lesssim \sum_{J\subset I_{n}} \norm{\tilde{u}_{0}\prod_{j \in J}\tilde{u}_{j}}_{H^{q-1}}^{2} \norm{\prod_{j \in I_{n} \setminus J}\tilde{u}_{j}}_{H^{q-1}}^{2}
  \\
  & \lesssim \norm{\tilde{u}_{0}}_{H^{q-1}}^{2} \dotsm \norm{\tilde{u}_{n}}_{H^{q-1}}^{2},
\end{align*}
because $\HH{q-1}$ is a multiplicative algebra. Since, moreover, $u_{j}^{\prime}$ and $\tilde{u}_{j}$ have the same $H^{q-1}$ norm, we obtain finally
\begin{equation*}
    \norm{A_{n}(u_{0}, \dotsc ,u_{n})}_{H^{q-r}} \lesssim \norm{u_{0}}_{H^{q}} \dotsm \norm{u_{n}}_{H^{q}},
\end{equation*}
which achieves the proof.
\end{proof}

\section{Boundedness properties of right translations}
\label{sec:boundedness_properties}

In this Appendix, we provide some explicit estimates for the right representation of $\D{q}$ on $\mathcal{L}(\HH{\rho},\HH{\rho})$, when $q >3/2$ and $0 \le \rho \le q$. These results are certainly not new but we give here very precise estimates.

Let us recall first the Slobodeckij spaces $W^{s,p}(\Circle)$. For $m \in \NN$ and $1 \le p \le \infty$, the $W^{m,p}$-norm of a measurable functions on $\Circle$ is defined by
\begin{equation*}
  \norm{u}_{W^{m,p}} := \sum_{j=0}^{m}\norm{u^{(j)}}_{L^{p}},
\end{equation*}
where $u^{(j)}$ denotes the derivative of order $j$. When $s > 0$ is not an integer and $p < \infty$, the $W^{s,p}$-norm is defined by
\begin{equation*}
  \norm{u}_{W^{s,p}} := \norm{u}_{W^{m,p}} + \mathfrak{p}_{\sigma,p}(u^{(m)})
\end{equation*}
where
\begin{equation*}
  m = [s] \quad \text{and} \quad s = m + \sigma,
\end{equation*}
and the semi-norm $\mathfrak{p}_{\sigma,p}$ ($0 < \sigma < 1$) is defined by
\begin{equation*}
    \mathfrak{p}_{\sigma,p}(w) = \left(\int_{\Circle}\int_{\Circle}\frac{\vert w(x) - w(y)\vert^{p}}{\vert x-y\vert^{1+p\sigma}}\,dx\,dy\right)^{1/p}.
\end{equation*}
In the following, we write $\mathfrak{p}_{\sigma,2} = \mathfrak{p}_{\sigma}$, when there is no ambiguity. The Banach space $W^{s,p}(\Circle)$ is by definition the completion of $\CS$ with respect to the $W^{s,p}$-norm and when $p = 2$, we get the Hilbert space
\begin{equation*}
  W^{s,2}(\Circle) = \HH{s}.
\end{equation*}
Recall that for $1 \le p, q < \infty$ and $r, s \in \RR$ such that
\begin{equation*}
  r \ge s, \quad \text{and} \quad r - \frac{1}{p} \ge s - \frac{1}{q},
\end{equation*}
we have the continuous \emph{Sobolev embeddings} (see \cite{Tri1983})
\begin{equation*}
  W^{r,p}(\Circle) \subset W^{s,q}(\Circle),
\end{equation*}
and
\begin{equation*}
  \HH{s} \subset W^{m,\infty}(\Circle),
\end{equation*}
for $s = m + \sigma$, $m \in \NN$ and $\sigma > 1/2$.

Recall that the space $\HH{s}$ is a multiplicative algebra for $s > 1/2$. We have, moreover, the following result which is a discrete version of~\cite[Lemma 2.3]{IKT2013}.

\begin{lem}\label{lem:pointwise_multiplication}
Let $q > 1/2$ and $0 \le \rho \le q$. Then, pointwise multiplication in $\CS$ extends to a continuous bilinear mapping
\begin{equation*}
  \HH{q} \times \HH{\rho} \to \HH{\rho}.
\end{equation*}
\end{lem}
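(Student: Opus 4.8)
The plan is to establish the bilinear estimate
\begin{equation*}
  \norm{uv}_{\HH{\sigma}} \le C\,\norm{u}_{\HH{\sigma}}\norm{v}_{\HH{1}}, \qquad u,v \in \CS,
\end{equation*}
and then extend by density, since $\CS$ is dense in both $\HH{\sigma}$ and $\HH{1}$ and bilinear continuous maps on a dense subspace extend uniquely. To prove the estimate, I would use the description of the $\HH{\sigma}$-norm ($0<\sigma<1$) via the Gagliardo semi-norm: up to equivalence of norms, $\norm{w}_{\HH{\sigma}}^2 \asymp \norm{w}_{L^2}^2 + \mathfrak{p}_\sigma(w)^2$, where $\mathfrak{p}_\sigma$ is the semi-norm recalled just before the statement. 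Thus it suffices to bound separately $\norm{uv}_{L^2}$ and $\mathfrak{p}_\sigma(uv)$.

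First I would handle the $L^2$ part. Since $\HH{1} \subset W^{0,\infty}(\Circle) = L^\infty(\Circle)$ continuously (this is the Sobolev embedding $\HH{s} \subset W^{m,\infty}$ with $s = 0 + \sigma'$, $\sigma' > 1/2$, applied with $s=1$ — recalled in the excerpt just above), we get $\norm{uv}_{L^2} \le \norm{v}_{L^\infty}\norm{u}_{L^2} \le C\,\norm{v}_{\HH{1}}\norm{u}_{\HH{\sigma}}$, using $\norm{u}_{L^2} \le \norm{u}_{\HH{\sigma}}$. For the Gagliardo part I would split
\begin{equation*}
  u(x)v(x) - u(y)v(y) = \bigl(u(x)-u(y)\bigr)v(x) + u(y)\bigl(v(x)-v(y)\bigr),
\end{equation*}
so that $\mathfrak{p}_\sigma(uv) \le \norm{v}_{L^\infty}\,\mathfrak{p}_\sigma(u) + \sup_z\abs{u(z)} \cdot \bigl(\int\!\!\int \frac{\abs{v(x)-v(y)}^2}{\abs{x-y}^{1+2\sigma}}\,dx\,dy\bigr)^{1/2}$. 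The first term is controlled by $C\,\norm{v}_{\HH{1}}\norm{u}_{\HH{\sigma}}$ exactly as above. For the second term, I would note that the double integral is $\mathfrak{p}_\sigma(v)^2$, and since $\sigma < 1$ the embedding $\HH{1} \subset \HH{\sigma}$ gives $\mathfrak{p}_\sigma(v) \le C\,\norm{v}_{\HH{1}}$; combined with $\sup_z\abs{u(z)}$, however, I do \emph{not} have an $L^\infty$ bound on $u$ (only $u\in\HH{\sigma}$ with $\sigma$ possibly $\le 1/2$), so this naive splitting is not quite enough.

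The main obstacle is therefore the term where $u$ appears pointwise: one cannot afford $\norm{u}_{L^\infty}$. The remedy is to be less wasteful — instead of pulling $v(x)-v(y)$ out in $L^\infty$ of $u$, use that $v \in \HH{1}$ is Lipschitz-like: $\abs{v(x)-v(y)} \le C\,\abs{x-y}\,\norm{v}_{\HH{1}}$ is too strong for $\HH{1}$ on $\Circle$ (we'd need $W^{1,\infty}$), but $\abs{v(x)-v(y)} \le \abs{x-y}^{1/2}\norm{v'}_{L^2}$ holds by Cauchy--Schwarz along the segment, hence $\frac{\abs{v(x)-v(y)}^2}{\abs{x-y}^{1+2\sigma}} \le \norm{v'}_{L^2}^2\,\abs{x-y}^{-2\sigma}$, which is integrable near the diagonal precisely when $\sigma < 1/2$; for $\sigma \in [1/2,1)$ one interpolates this with the crude bound $\abs{v(x)-v(y)}^2 \le 4\norm{v}_{L^\infty}^2$ to gain integrability. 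Then the remaining factor is $\int\!\!\int_{\abs{x-y}<1}\abs{u(y)}^2\,(\ldots)\,dx\,dy \lesssim \norm{u}_{L^2}^2$ after integrating out the (now integrable) kernel in $x$. Putting the pieces together yields $\mathfrak{p}_\sigma(uv) \le C\,\norm{u}_{\HH{\sigma}}\norm{v}_{\HH{1}}$, and combining with the $L^2$ estimate completes the proof. (Alternatively, and perhaps more cleanly, one can invoke the general Sobolev multiplication theorem $\HH{s_1}\cdot\HH{s_2}\hookrightarrow\HH{s}$ valid when $s_1+s_2 \ge 0$, $s_i \ge s$, and $s_1+s_2-s > 1/2 = \dim(\Circle)/2$, which here reads $\sigma + 1 \ge 0$, $\sigma,1\ge\sigma$, and $\sigma + 1 - \sigma = 1 > 1/2$ — all satisfied — see \cite{Tri1983}; I would present the elementary argument above for self-containedness and mention this as the abstract reason.)
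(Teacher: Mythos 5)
Your argument is fine for $\sigma<1/2$, but it has a genuine gap exactly where you invoke the ``interpolation'' fix, namely for $\sigma\in[1/2,1)$. From $v\in\HH{1}$ the best pointwise modulus bound is $\abs{v(x)-v(y)}^{2}\le\abs{x-y}\,\norm{v_{x}}_{L^{2}}^{2}$, and interpolating this with the constant bound $4\norm{v}_{L^{\infty}}^{2}$ only produces $\abs{v(x)-v(y)}^{2}\lesssim\abs{x-y}^{\theta}\norm{v}_{\HH{1}}^{2}$ with $\theta\le 1$; the kernel then becomes $\abs{x-y}^{\theta-1-2\sigma}$, which is \emph{more} singular than the $\abs{x-y}^{-2\sigma}$ you started from, not less. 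Since $\abs{x-y}^{-2\sigma}$ is integrable on $\Circle$ only for $\sigma<1/2$, the interpolation loses integrability instead of gaining it, and the step fails precisely in the range it is meant to cover. For $\sigma>1/2$ this is harmless (there $u\in L^{\infty}$, or one simply notes that $\HH{\sigma}$ is a multiplicative algebra and $\HH{1}\subset\HH{\sigma}$, which is how the paper disposes of that case), but the endpoint $\sigma=1/2$ cannot be reached by any pointwise bound on $\abs{v(x)-v(y)}$: with the sharp H\"{o}lder-$1/2$ estimate the double integral diverges logarithmically at the diagonal, so the mixed term must be treated by a genuinely integral argument.

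The paper's proof handles $\sigma\le 1/2$ (including the endpoint) differently: after the same splitting, it applies H\"{o}lder's inequality with exponents $\alpha,\alpha'$ to the term $\int\!\!\int\abs{u(y)}^{2}\abs{v(x)-v(y)}^{2}\abs{x-y}^{-1-2\sigma}\,dx\,dy$, which yields $\norm{u}_{L^{2\alpha}}^{2}\,\mathfrak{p}_{\sigma+1/2\alpha,2\alpha'}(v)^{2}$; the Sobolev embeddings $\HH{\sigma}\subset L^{2\alpha}$ and $\HH{1}\subset W^{\sigma+1/2\alpha,2\alpha'}(\Circle)$ (possible for $1/2\sigma\le\alpha'\le 1/\sigma$) then give the bilinear estimate. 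Your parenthetical appeal to the general multiplication theorem $\HH{s_{1}}\cdot\HH{s_{2}}\hookrightarrow\HH{s}$ would indeed cover all $\sigma\in(0,1)$ at once, but as written it is only a remark; if you want to rely on it, it should be promoted to the actual proof with a precise reference, since the ``self-contained'' argument you present does not establish the cases $\sigma\in[1/2,1)$.
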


\begin{proof}
Given $u,v\in \CS$, we have
\begin{equation*}
  \norm{uv}_{H^{\rho}}^{2} = \sum_{n \in \ZZ} (1+n^{2})^{\rho} \abs{\widehat{uv}(n)}^{2}.
\end{equation*}
Let us introduce the sequences
\begin{equation*}
  \tilde{u}(k) := (1+ k^{2})^{q/2}\abs{\hat{u}(k)}, \quad \text{and} \quad \tilde{v}(l) := (1+ l^{2})^{\rho/2}\abs{\hat{v}(l)},
\end{equation*}
so that
\begin{equation*}
  \norm{\tilde{u}}_{l^{2}} = \norm{u}_{H^{q}}, \quad \text{and} \quad \norm{\tilde{v}}_{l^{2}} = \norm{v}_{H^{\rho}}.
\end{equation*}
We have thus
\begin{align*}
  (1+n^{2})^{\rho/2} \abs{\widehat{uv}(n)} & = (1+n^{2})^{\rho/2} \abs{\sum_{k+l = n} \hat{u}(k)\hat{v}(l)}
  \\
  & \lesssim \sum_{\substack{k+l = n \\ \abs{k} \le \abs{l}}} \frac{1}{(1+k^{2})^{q/2}} \tilde{u}(k) \tilde{v}(l) + \sum_{\substack{k+l = n \\ \abs{k} > \abs{l}}} \frac{1}{(1+l^{2})^{q/2}} \tilde{u}(k) \tilde{v}(l)
  \\
  & \lesssim (\lambda^{q}\tilde{u} \ast \tilde{v})(n) + (\tilde{u} \ast \lambda^{q}\tilde{v})(n),
\end{align*}
where $\lambda^{q}$ is the sequence defined by
\begin{equation*}
  \lambda^{q}(k) := \frac{1}{(1+k^{2})^{q/2}}
\end{equation*}
and $\ast$ stand for the convolution of sequences. By virtue of the Young inequality
\begin{equation*}
  \norm{a \ast b}_{l^{2}} \lesssim \norm{a}_{l^{1}} \norm{b}_{l^{2}},
\end{equation*}
which is valid for any complex sequences $a$,$b$, we get
\begin{equation*}
  \norm{\lambda^{q}\tilde{u} \ast \tilde{v}}_{l^{2}}
  \lesssim \norm{\lambda^{q}\tilde{u}}_{l^{1}} \norm{\tilde{v}}_{l^{2}},
  \quad
  \norm{\tilde{u} \ast \lambda^{q}\tilde{v}}_{l^{2}}
  \lesssim \norm{\lambda^{q}\tilde{v}}_{l^{1}} \norm{\tilde{u}}_{l^{2}},
\end{equation*}
and by Cauchy-Schwarz, we obtain finally
\begin{equation*}
  \norm{uv}_{H^{\rho}} \lesssim \norm{\lambda^{q}}_{l^{2}} \norm{\tilde{u}}_{l^{2}}\norm{\tilde{v}}_{l^{2}}
  \lesssim \norm{u}_{H^{q}}\norm{v}_{H^{\rho}},
\end{equation*}
which achieves the proof.
\end{proof}

The main estimates which have been used in this paper concerning the right representation of $\D{q}$ on $\mathcal{L}(\HH{\rho},\HH{\rho})$ are given below. Note that the cases overlap.

\begin{lem}\label{lem:Rphi_estimate}
Let $q > 3/2$, $\varphi\in\D{q}$ and $0 \le \rho \le q$.
\begin{description}
  \item[Case 1] For $0 \le \rho \le 1$, we have
    \begin{equation}\label{eq:Rphi_estimate_firstcase}
        \norm{R_{\varphi}}_{\mathcal{L}(\HH{\rho},\HH{\rho})} \le C_{\rho}\left(\norm{1/\varphi_{x}}_{L^{\infty}},\norm{\varphi_{x}}_{L^{\infty}}\right).
    \end{equation}
  \item[Case 2] For $0 \le \rho \le 2$, we have
    \begin{equation}\label{eq:Rphi_estimate_secondcase}
        \norm{R_{\varphi}}_{\mathcal{L}(\HH{\rho},\HH{\rho})} \le C_{\rho}\left(\norm{1/\varphi_{x}}_{L^{\infty}},\norm{\varphi_{x}}_{H^{q-1}}\right).
    \end{equation}
  \item[Case 3] For $3/2 < \rho \le 3$, we have
    \begin{equation}\label{eq:Rphi_estimate_thirdcase}
        \norm{R_{\varphi}}_{\mathcal{L}(\HH{\rho},\HH{\rho})} \le C_{\rho}\left(\norm{1/\varphi_{x}}_{L^{\infty}},\norm{\varphi_{x}}_{L^{\infty}}\right) \norm{\varphi_{x}}_{H^{\rho-1}}.
    \end{equation}
  \item[Case 4] For $\rho > 5/2$, we have
    \begin{equation}\label{eq:Rphi_estimate_fourthcase}
        \norm{R_{\varphi}}_{\mathcal{L}(\HH{\rho},\HH{\rho})} \le C_{\rho}\left(\norm{1/\varphi_{x}}_{L^{\infty}},\norm{\varphi_{x}}_{H^{\rho-2}}\right) \norm{\varphi_{x}}_{H^{\rho-1}}.
    \end{equation}
\end{description}
In each case, $C_{\rho}$ is a positive, continuous function on $(\RR^{+})^{2}$, depending on $\rho$.
\end{lem}

\begin{proof}
Let $q > 3/2$ and fix $\varphi \in \D{q}$. Since the estimates involve only linear expressions of $u$ and $\CS$ is dense in $\HH{\rho}$, it is enough to establish them for $u \in \CS$. Note first that a change of variables leads to
\begin{equation*}
  \norm{u \circ \varphi}_{L^{2}} \le \norm{1/\varphi_{x}}_{L^{\infty}}^{1/2} \norm{u}_{L^{2}},
\end{equation*}
whereas
\begin{equation*}
   \mathfrak{p}_{\sigma}(u\circ \varphi) \le \norm{1/\varphi_{x}}_{L^{\infty}}\norm{\varphi_{x}}_{L^{\infty}}^{(1+2\sigma)/2} \mathfrak{p}_{\sigma}(u),
\end{equation*}
for $\sigma \in (0,1)$. Moreover, we have
\begin{equation*}
  \norm{(u\circ \varphi)^{(1)}}_{L^{2}} \lesssim \norm{\varphi_{x}}_{L^{\infty}} \norm{1/\varphi_{x}}^{1/2}_{L^{\infty}} \norm{u_{x}}_{L^{2}},
\end{equation*}
which proves the \emph{first case}. Note also that this proves also~\eqref{eq:Rphi_estimate_secondcase} for $0 \le \rho \le 1$, because $\norm{\varphi_{x}}_{L^{\infty}} \le \norm{\varphi_{x}}_{H^{q-1}}$. Suppose now that $\rho = 1 + \sigma$, where $0 < \sigma \le 1$. In this case, we have
\begin{equation*}
\begin{split}
  \norm{u\circ \varphi}_{H^{\rho}} & \lesssim \norm{u\circ \varphi}_{H^{1}} + \norm{(u_{x}\circ \varphi)\varphi_{x}}_{H^{\sigma}}
  \\
  & \lesssim C_{1}\left( \norm{1/\varphi_{x}}_{L^{\infty}}, \norm{\varphi_{x}}_{H^{q-1}}\right) \norm{u}_{H^{1}}
  \\
  & \quad + C_{\sigma}\left( \norm{1/\varphi_{x}}_{L^{\infty}}, \norm{\varphi_{x}}_{H^{q-1}}\right) \norm{u_{x}}_{H^{\sigma}} \norm{\varphi_{x}}_{H^{q-1}},
\end{split}
\end{equation*}
where we have used lemma~\ref{lem:pointwise_multiplication} and~\eqref{eq:Rphi_estimate_secondcase} for $0 \le \rho \le 1$. This completes the proof of the \emph{second case}. If $\rho=1+\sigma$ with $1/2 < \sigma \le 1$, we have using~\eqref{eq:Rphi_estimate_firstcase} and the fact that $\HH{\sigma}$ is a multiplicative algebra
\begin{equation*}
\begin{split}
  \norm{u\circ \varphi}_{H^{\rho}} & \lesssim \norm{u\circ \varphi}_{H^{1}} + \norm{u_{x}\circ \varphi}_{H^{\sigma}} \norm{\varphi_{x}}_{H^{\sigma}} \\
  & \lesssim C_{1}\left( \norm{1/\varphi_{x}}_{L^{\infty}}, \norm{\varphi_{x}}_{L^{\infty}}\right) \norm{u}_{H^{1}}
  \\
  & \quad + C_{\sigma}\left( \norm{1/\varphi_{x}}_{L^{\infty}}, \norm{\varphi_{x}}_{L^{\infty}}\right) \norm{u_{x}}_{H^{\sigma}} \norm{\varphi_{x}}_{H^{\sigma}}.
\end{split}
\end{equation*}
Now, noting that
\begin{equation*}
  1 \le \norm{1/\varphi_{x}}_{L^{\infty}} \norm{\varphi_{x}}_{L^{\infty}} \le \norm{1/\varphi_{x}}_{L^{\infty}} \norm{\varphi_{x}}_{H^{\sigma}},
\end{equation*}
we get
\begin{equation*}
  \norm{u\circ \varphi}_{H^{\rho}} \lesssim C_{\rho}\left( \norm{1/\varphi_{x}}_{L^{\infty}}, \norm{\varphi_{x}}_{L^{\infty}}\right) \norm{\varphi_{x}}_{H^{\rho-1}} \norm{u}_{H^{\rho}},
\end{equation*}
where $C_{\rho}$ is a positive, continuous function on $(\RR^{+})^{2}$. If $\rho=2+\sigma$ where $0 < \sigma \le 1$, we have by virtue of lemma~\ref{lem:pointwise_multiplication} and~\eqref{eq:Rphi_estimate_firstcase}
\begin{equation*}
\begin{split}
  \norm{(u \circ \varphi)^{(2)}}_{H^{\sigma}} & \lesssim  \norm{u_{xx} \circ \varphi}_{H^{\sigma}} \norm{(\varphi_{x})^{2}}_{H^{1}} + \norm{u_{x} \circ \varphi}_{H^{1}} \norm{\varphi_{xx}}_{H^{\sigma}} \\
    & \lesssim C_{\sigma}\left( \norm{1/\varphi_{x}}_{L^{\infty}}, \norm{\varphi_{x}}_{L^{\infty}}\right) \norm{u_{xx}}_{H^{\sigma}} \norm{\varphi_{x}}_{L^{\infty}}\norm{\varphi_{x}}_{H^{1}}
    \\
    & \quad + C_{1}\left( \norm{1/\varphi_{x}}_{L^{\infty}}, \norm{\varphi_{x}}_{L^{\infty}}\right) \norm{u_{x}}_{H^{1}} \norm{\varphi_{xx}}_{H^{\sigma}}
\end{split}
\end{equation*}
because $\norm{(\varphi_{x})^{2}}_{H^{1}} \lesssim \norm{\varphi_{x}}_{L^{\infty}}\norm{\varphi_{x}}_{H^{1}}$. Therefore
\begin{equation*}
\begin{split}
  \norm{u\circ \varphi}_{H^{\rho}} & \lesssim \norm{u\circ \varphi}_{H^{1}} + \norm{(u\circ \varphi)^{(2)}}_{H^{\sigma}} \\
    & \lesssim C_{\rho}\left( \norm{1/\varphi_{x}}_{L^{\infty}}, \norm{\varphi_{x}}_{L^{\infty}}\right) \norm{\varphi_{x}}_{H^{\rho-1}} \norm{u}_{H^{\rho}},
\end{split}
\end{equation*}
where $C_{\rho}$ is a positive, continuous function. This proves \emph{the third case}. If $\rho=2+\sigma$ and $1/2 < \sigma \le 1$ we get immediately \eqref{eq:Rphi_estimate_fourthcase} from the preceding computation because then $\norm{\varphi_{x}}_{L^{\infty}} \lesssim \norm{\varphi_{x}}_{H^{\sigma}}$. Suppose now that $\rho=m+\sigma$, where $m\ge 3$ and $\sigma\in [0,1)$. Given an integer $n$, we have
\begin{equation*}
  (u \circ \varphi)^{(n+1)} = \sum_{k=0}^{n} \left( u_{x}^{(k)}\circ \varphi \right) W_{n,k}(\varphi),
\end{equation*}
where $W_{n,k}(\varphi)$ is a homogeneous polynomial of degree $k+1$ in the variables $\varphi_{x}, \dotsc , \varphi_{x}^{(n-k)}$. The sequence $W_{n,k}$ is defined by
\begin{equation*}
  W_{n,n}(\varphi) = \varphi_{x}^{n+1}, \qquad W_{n,0}(\varphi) = \varphi_{x}^{(n)},
\end{equation*}
and
\begin{equation*}
  W_{n+1,k}(\varphi) = W_{n,k-1}(\varphi)\varphi_{x} + W_{n,k}(\varphi)^{\prime}, \quad 1 \le k \le n.
\end{equation*}
In particular, for $n \ge 2$, we have
\begin{equation*}
  W_{n,1}(\varphi) = (n+1) \varphi_{x}\varphi_{x}^{(n-1)} + P_{n}(\varphi_{x}, \dotsc , \varphi_{x}^{(n-2)}),
\end{equation*}
where $P_{n}$ is a homogeneous polynomial of degree 2. Thus, for $m \ge 2$, we get
\begin{equation}\label{eq:m+1_derivative}
\begin{split}
  \norm{(u\circ \varphi)^{(m+1)}}_{L^{2}} & \lesssim \norm{u_{x}}_{L^{\infty}}\norm{\varphi_{x}^{(m)}}_{L^{2}} + \norm{u_{xx}\circ\varphi}_{L^{2}}\norm{W_{m,1}(\varphi)}_{L^{\infty}}
  \\
  & \quad + \sum_{k=2}^{m} \norm{u_{x}^{(k)}\circ \varphi}_{L^{2}}\norm{W_{m,k}(\varphi)}_{L^{\infty}}
  \\
  & \lesssim \left[1 + \norm{1/\varphi_{x}}_{L^{\infty}}\left(\norm{\varphi_{x}}_{L^{\infty}} + \sum_{k=1}^{m} \norm{\varphi_{x}}_{H^{m-1}}^{k}\right)\right]
  \\
  & \quad \times \norm{\varphi_{x}}_{H^{m-1}}\norm{u}_{H^{m}},
\end{split}
\end{equation}
because
\begin{equation*}
  \norm{W_{m,1}(\varphi)}_{L^{\infty}} \lesssim \norm{\varphi_{x}}_{L^{\infty}} \norm{\varphi_{x}}_{H^{m}} + \norm{\varphi_{x}}^{2}_{H^{m-1}},
\end{equation*}
and
\begin{equation*}
  \norm{W_{m,k}(\varphi)}_{L^{\infty}} \lesssim \norm{\varphi_{x}}^{k+1}_{H^{m-1}}, \quad 2 \le k \le m.
\end{equation*}
Starting, with $m=2$, we get first \eqref{eq:Rphi_estimate_fourthcase} for $\rho = 3$ and the case when $\rho$ is an integer $m \ge 3$ is obtained by an inductive argument, using~\eqref{eq:m+1_derivative}. Now, using lemma~\ref{lem:pointwise_multiplication}, we have for $m \ge 3$ and $0 < \sigma <1$
\begin{equation*}
  \norm{(u \circ \varphi)^{(m)}}_{H^{\sigma}}
  \lesssim \norm{u_{x}\circ \varphi}_{H^{1}}\norm{\varphi_{x}^{(m-1)}}_{H^{\sigma}} + \sum_{k=1}^{m-1} \norm{u_{x}^{(k)}\circ \varphi}_{H^{\sigma}} \norm{W_{m-1,k}(\varphi)}_{H^{1}}.
\end{equation*}
But, for $2 \le k \le m-1$ we have
\begin{equation*}
  \norm{W_{m-1,k}(\varphi)}_{H^{1}} \lesssim \norm{\varphi_{x}}_{H^{m-2}}^{k+1}.
\end{equation*}
and for $k=1$ and $m-1 \ge 2$, we have
\begin{equation*}
  \norm{W_{m-1,1}(\varphi)}_{H^{1}} \lesssim \norm{\varphi_{x}}_{H^{m-2}} \norm{\varphi_{x}}_{H^{m-1}}.
\end{equation*}
We get therefore
\begin{equation*}
\begin{split}
  \norm{(u \circ \varphi)^{(m)}}_{H^{\sigma}}
  & \lesssim \left( \norm{1/\varphi_{x}}^{1/2}_{L^{\infty}} + \norm{1/\varphi_{x}}_{L^{\infty}}^{3/2} + \norm{1/\varphi_{x}}^{1/2}_{L^{\infty}} \sum_{k=1}^{m-1} \norm{\varphi_{x}}_{H^{m-2}}^{k} \right) \\
  & \quad \times \norm{\varphi_{x}}_{H^{\rho-1}} \norm{u}_{H^{\rho}}
\end{split}
\end{equation*}
which shows that estimate~\eqref{eq:Rphi_estimate_fourthcase} is also true when $\rho > 5/2$ is not necessary an integer. This proves the \emph{fourth case} and achieves the proof.
\end{proof}

\begin{cor}\label{cor:Rphi_continuity}
Let $q > 3/2$ and $0 \le \rho \le q$. Then, the mapping
\begin{equation*}
  (\varphi, u) \mapsto u \circ \varphi, \qquad \D{q} \times \HH{\rho} \to \HH{\rho}
\end{equation*}
is continuous.
\end{cor}

\begin{proof}
Lemma~\ref{lem:Rphi_estimate} shows that the mapping
\begin{equation*}
	\varphi \mapsto R_{\varphi}, \qquad \D{q} \to \mathcal{L}(\HH{\rho},\HH{\rho})
\end{equation*}
is locally bounded for $q > 3/2$ and $0 \le \rho \le q$. To establish the continuity of the mapping
\begin{equation*}
	(\varphi, u) \mapsto u \circ \varphi, \qquad \D{q} \times \HH{\rho} \to \HH{\rho},
\end{equation*}
it is thus sufficient to prove that
\begin{equation*}
	\varphi \mapsto u \circ \varphi, \qquad \D{q} \to \HH{\rho}.
\end{equation*}
is continuous for each fixed $u\in \HH{\rho}$. Let $\varphi_{0} \in \D{q}$ and $\eps >0$. Choose a neighbourhood $V$ of $\varphi_{0}$, that we may suppose to be a ball in a local chart of the Banach manifold $\D{q}$, and on which
\begin{equation*}
  \norm{R_{\varphi}}_{\mathcal{L}(\HH{\rho},\HH{\rho})} < K,
\end{equation*}
for some positive constant $K$. Since $\CS$ is dense in $\HH{\rho}$, we can find $w\in \CS$ such that
\begin{equation*}
  \norm{u-w}_{H^{\rho}} < \eps /K.
\end{equation*}
We have thus
\begin{equation*}
  \norm{u \circ \varphi - u \circ \varphi_{0}}_{H^{\rho}} < \norm{w \circ \varphi - w \circ \varphi_{0}}_{H^{\rho}} + 2 \eps.
\end{equation*}
Now, let $\varphi(t):= t\varphi + (1-t)\varphi_{0}$. We have, pointwise
\begin{equation*}
  w\circ\varphi - w\circ\varphi_{0} = \int_{0}^{1} (w_{x}\circ\varphi(t)) (\varphi - \varphi_{0})\,dt,
\end{equation*}
and we deduce from lemma~\ref{lem:pointwise_multiplication}, that
\begin{equation*}
  \norm{w\circ\varphi - w\circ\varphi_{0}}_{H^{\rho}} \le C \int_{0}^{1} \norm{w_{x}\circ\varphi(t)}_{H^{\rho}} \norm{\varphi - \varphi_{0}}_{H^{q}}\,dt,
\end{equation*}
for some positive constant $C$, which depends only on $q$ and $\rho$. Thus
\begin{equation*}
  \norm{w\circ\varphi - w\circ\varphi_{0}}_{H^{\rho}} \le CK \norm{w_{x}}_{H^{\rho}} \norm{\varphi - \varphi_{0}}_{H^{q}},
\end{equation*}
and therefore, if $\norm{\varphi - \varphi_{0}}_{H^{q}}$ is small enough, we get that
\begin{equation*}
  \norm{u \circ \varphi - u \circ \varphi_{0}}_{H^{\rho}} < 3 \eps,
\end{equation*}
which achieves the proof.
\end{proof}

\begin{rem}\label{rem:norm-continuity}
The mapping
\begin{equation*}
	(\varphi, u) \mapsto R_{\varphi}(u) := u \circ \varphi, \qquad \D{q} \times \HH{q} \to \HH{q}.
\end{equation*}
is continuous for $q > 3/2$ (see \cite{EM1970}), but that does not imply that the mapping
\begin{equation*}
  \varphi \mapsto R_{\varphi}, \qquad \D{q} \to \mathcal{L}(\HH{q})
\end{equation*}
is continuous with respect to the operator norm on $\mathcal{L}(\HH{q})$ (\emph{norm continuity}). \emph{Norm continuity} obviously implies \emph{continuity} but the converse is false. Indeed, a general result in the theory of  semigroups of linear operators states that a semigroup on a Banach space $E$ is norm continuous at $0$, if and only if, its infinitesimal generator is bounded on $E$, cf. \cite[Theorem 1.2]{Paz1983}. Let now $q>3/2$ and let $\tau_{s}$ be the rotation by the angle $s$ on $\Circle$. Then the representation of the group $\{R_{\tau_s}\,;\, s\in \mathbb{R}\}$ is continuous on $\HH{q}$. But it cannot be norm continuous, since its infinitesimal generator $D$ is not bounded on $\HH{q}$. A direct argument, which shows that $\norm{R_{\tau_{s}} - \mathrm{Id}}_{\mathcal{L}(\HH{q})}$ is bounded away from $0$ for all $s$ near $0$ is runs as follows: Let $s \in (-1/2,1/2)$ and $u_{s}$ be a periodic, bump function with support in $(k-s/2,k+s/2)$ ($k \in \ZZ$) with $\norm{u_{s}}_{L^{2}} = 1$. We have then
\begin{equation*}
  \norm{R_{\tau_{s}}u_{s} - u_{s}}_{\HH{q}}^{2}  = 2\norm{u_{s}}_{\HH{q}}^{2},
\end{equation*}
because $u_{s}$ and $R_{\tau_{s}}u$ are $\HH{q}$-orthogonal and $R_{\tau_{s}}$ is an $\HH{q}$-isometry. Hence
\begin{equation*}
  \norm{R_{\tau_{s}} - \mathrm{Id}}_{\mathcal{L}(\HH{q})} \ge \sqrt{2} \quad \text{for} \quad \frac{-1}{2} < s < \frac{1}{2},
\end{equation*}
which proves that the representation $\varphi \mapsto R_{\varphi}$ is not \emph{norm continuous}.
\end{rem}

Nevertheless, we have the following result.

\begin{cor}\label{cor:norm_continuity}
Let $q > 3/2$. Then, the mappings
\begin{equation*}
	\varphi \mapsto R_{\varphi}, \qquad \D{q} \to \mathcal{L}(\HH{q},\HH{q-1})
\end{equation*}
and
\begin{equation*}
	\varphi \mapsto R_{\varphi^{-1}}, \qquad \D{q} \to \mathcal{L}(\HH{q},\HH{q-1})
\end{equation*}
are continuous.
\end{cor}

\begin{proof}
Note first that since $\varphi \mapsto \varphi^{-1}$ from $\D{q}$ to $\D{q}$ is continuous (see for instance~\cite{IKT2013}), it is enough to show that
\begin{equation*}
	\varphi \mapsto R_{\varphi}, \qquad \D{q} \to \mathcal{L}(\HH{q},\HH{q-1})
\end{equation*}
is continuous. Let $\varphi_{0} \in \D{q}$. Choose a neighbourhood $V$ of $\varphi_{0}$, that we may suppose to be a ball in a local chart of the Banach manifold $\D{q}$, and on which
\begin{equation*}
  \norm{R_{\varphi}}_{\mathcal{L}(\HH{q-1},\HH{q-1})} < K,
\end{equation*}
for some positive constant $K$. Now, let $\varphi(t):= t\varphi + (1-t)\varphi_{0}$ and $v \in \HH{q}$. We have, pointwise
\begin{equation*}
  v\circ\varphi - v\circ\varphi_{0} = \int_{0}^{1} (v_{x}\circ\varphi(t)) (\varphi - \varphi_{0})\,dt.
\end{equation*}
Hence
\begin{equation*}
  \norm{v\circ\varphi - v\circ\varphi_{0}}_{H^{q-1}} \le C \int_{0}^{1} \norm{v_{x}\circ\varphi(t)}_{H^{q-1}} \norm{\varphi - \varphi_{0}}_{H^{q-1}}\,dt,
\end{equation*}
for some positive constant $C$ and from which we deduce that
\begin{equation*}
  \norm{R_{\varphi} - R_{\varphi_{0}}}_{\mathcal{L}(\HH{q},\HH{q-1})} \le CK \norm{\varphi - \varphi_{0}}_{H^{q}}.
\end{equation*}
This shows that
\begin{equation*}
	\varphi \mapsto R_{\varphi}, \qquad \D{q} \to \mathcal{L}(\HH{q},\HH{q-1})
\end{equation*}
is locally Lipschitz continuous and completes the proof.
\end{proof}

Another interpreting consequence of estimates~\ref{lem:Rphi_estimate} is the following results, which extends~\cite[Theorem 1.2]{IKT2013} beyond the critical exponent.

\begin{cor}\label{cor:composition_regularity}
Let $q > 3/2$. Then, the mappings
\begin{equation*}
  (\varphi, v) \mapsto v \circ \varphi, \qquad \D{q} \times \HH{q} \to \HH{q-1}
\end{equation*}
and
\begin{equation*}
  (\varphi, v) \mapsto v \circ \varphi^{-1}, \qquad \D{q} \times \HH{q} \to \HH{q-1}
\end{equation*}
are $C^{1}$.
\end{cor}

\begin{proof}
We are going to show that
\begin{equation*}
  (\varphi, v) \mapsto v \circ \varphi^{-1}, \qquad \D{q} \times \HH{q} \to \HH{q-1}
\end{equation*}
is $C^{1}$. The proof that the first mapping is $C^{1}$ is similar and easier. Observe first that if $\varphi(s)$ and $v(s)$ are smooth paths in $\DiffS$ and $\CS$ respectively with $\varphi(0)=\varphi$ and $v(0)=v$, we have
\begin{equation*}
  \partial_{s} \left[ v(s) \circ \varphi(s)^{-1} \right]_{s = 0} = \left(\delta v \circ\varphi^{-1}\right) - \left(\frac{v_{x}\circ\varphi^{-1}}{\varphi_{x}\circ\varphi^{-1}}\right) \left(\delta\varphi \circ \varphi^{-1}\right),
\end{equation*}
where $\delta\varphi = \partial_{s} \left.\varphi(s)\right\vert_{s = 0}$ and $\delta v = \partial_{s} \left. v(s)\right\vert_{s = 0}$. Let $U$ be a local chart in $\D{q}$, that we assume to be a ball in $\HH{q}$. Given $\varphi_{0},\varphi_{1}$ in $\DiffS \cap U$ and $v_{0},v_{1} \in \CS$, we set
\begin{equation*}
  \varphi(t):=(1-t)\varphi_{0}+t\varphi_{1}, \quad \text{and} \quad v(t):=(1-t)v_{0}+tv_{1},
\end{equation*}
for $t\in [0,1]$. We have therefore
\begin{multline}\label{eq:mean-value}
  v_{1} \circ \varphi_{1}^{-1} - v_{0} \circ \varphi_{0}^{-1} = \int_{0}^{1} \left((v_{1}-v_{0}) \circ \varphi(t)^{-1}\right)\, dt
   \\
   - \int_{0}^{1} \left(\frac{v_{x}(t)\circ\varphi(t)^{-1}}{\varphi_{x}(t)\circ\varphi(t)^{-1}}\right) \left((\varphi_{1}-\varphi_{0}) \circ \varphi(t)^{-1}\right)\, dt.
\end{multline}
Now, using lemma~\ref{lem:path_integral_continuity}, we observe that both sides of~\eqref{eq:mean-value} are continuous expressions of $\varphi_{k},v_{k}$ ($k=0,1$). Using a density argument, we conclude therefore, that the equality is still true in $\HH{q-1}$ if we take $\varphi_{0},\varphi_{1} \in \D{q}$ and $v_{0},v_{1} \in \D{q}$. Furthermore, the mapping
\begin{equation*}
  (\varphi,v) \mapsto \left[ (\delta \varphi, \delta v) \mapsto R_{\varphi^{-1}}\delta v - \left(\frac{v_{x}\circ\varphi^{-1}}{\varphi_{x}\circ\varphi^{-1}}\right)R_{\varphi^{-1}}\delta\varphi\right]
\end{equation*}
from
\begin{equation*}
  \D{q}\times \HH{q} \quad \text{to} \quad \mathcal{L}\left(\HH{q}\times\HH{q},\HH{q-1}\right)
\end{equation*}
is continuous by corollary~\ref{cor:norm_continuity} and the fact that $\HH{q-1}$ is a multiplicative algebra for $q > 3/2$. We conclude the proof using lemma~\ref{lem:mean_value_criteria}.
\end{proof}

To conclude this Appendix, we provide an estimate for the norm of $(\varphi^{-1})_{x}$ which might be useful, on its own.

\begin{lem}\label{lem:phi_inverse_estimate}
  Let $q > 3/2$. Then
  \begin{equation*}
    \norm{(\varphi^{-1})_{x}}_{H^{q-1}} \lesssim C_{q}(\norm{1/\varphi_{x}}_{L^{\infty}}, \norm{\varphi_{x}}_{H^{q-1}})
  \end{equation*}
  where $C_{q}$ is a positive, continuous function on $(\RR^{+})^{2}$.
\end{lem}

\begin{proof}
Given $\sigma \in (0,1)$, a change of variables leads to the estimate
\begin{equation*}
   \norm{w\circ \varphi^{-1}}_{H^{\sigma}} \lesssim \left(  \norm{\varphi_{x}}_{L^{\infty}}^{1/2} + \norm{\varphi_{x}}_{L^{\infty}} \norm{1/\varphi_{x}}_{L^{\infty}}^{(1+2\sigma)/2} \right) \norm{w}_{H^{\sigma}},
\end{equation*}
for any $w \in \HH{\sigma}$ and any $C^{1}$ diffeomorphism $\varphi$, whereas
\begin{equation*}
  \norm{w\circ \varphi^{-1}}_{H^{1}} \lesssim \left( \norm{\varphi_{x}}_{L^{\infty}}^{1/2} + \norm{\varphi_{x}}_{L^{\infty}}^{3/2} \right) \norm{1/\varphi_{x}}_{L^{\infty}} \norm{w}_{H^{1}},
\end{equation*}
for any $w \in \HH{1}$ and any $C^{1}$ diffeomorphism $\varphi$.

1) Suppose first that $q = 1 + \sigma$ and thus $\sigma > 1/2$. We get
\begin{equation*}
\begin{split}
  \norm{(\varphi^{-1})_{x}}_{H^{\sigma}} & = \norm{\frac{1}{\varphi_{x}}\circ \varphi^{-1}}_{H^{\sigma}} \\
    & \le \left( \norm{\varphi_{x}}_{L^{\infty}}^{1/2} + \norm{\varphi_{x}}_{L^{\infty}} \norm{1/\varphi_{x}}_{L^{\infty}}^{(1+2\sigma)/2} \right) \norm{1/\varphi_{x}}_{H^{\sigma}}.
\end{split}
\end{equation*}
But a direct computation shows that
\begin{equation*}
  \norm{1/\varphi_{x}}_{H^{\sigma}} \lesssim \norm{1/\varphi_{x}}_{L^{\infty}}^{2} \norm{\varphi_{x}}_{H^{\sigma}},
\end{equation*}
which finishes the proof of the lemma for $q < 2$, since $\norm{\varphi_{x}}_{L^{\infty}} \lesssim \norm{\varphi_{x}}_{H^{\sigma}}$.

2) Suppose now that $q = m + \sigma$, where $m \ge 2$ and $\sigma \in [0,1)$. Given an integer $n$, we have
\begin{equation*}
  \left(\frac{1}{\varphi_{x}\circ\varphi^{-1}}\right)^{(n)} = \frac{1}{(\varphi_{x}\circ\varphi^{-1})^{2n+1}} P_{n}(\varphi_{x}\circ\varphi^{-1}, \dotsc, \varphi_{x}^{(n)}\circ\varphi^{-1}),
\end{equation*}
where $P_{n}$ is a homogeneous polynomial of degree $n$, which is of partial degree at most one in $\varphi_{x}^{(n)}\circ\varphi^{-1}$. We have therefore
\begin{equation*}
\begin{split}
  \norm{\frac{1}{\varphi_{x}\circ\varphi^{-1}}}_{H^{m-2}} & \lesssim \sum_{k=0}^{m-2} \norm{1/\varphi_{x}}_{L^{\infty}}^{2k+1} \norm{P_{k}}_{L^{\infty}} \\
    & \lesssim \sum_{k=0}^{m-2} \norm{1/\varphi_{x}}_{L^{\infty}}^{2k+1} \norm{\varphi_{x}}_{H^{m-1}}^{k},
\end{split}
\end{equation*}
whereas
\begin{equation*}
  \norm{\left(\frac{1}{\varphi_{x}\circ\varphi^{-1}}\right)^{(m-1)}}_{H^{\sigma}} \lesssim \norm{1/\varphi_{x}}_{H^{1}}^{2m-1} \norm{P_{m-1}}_{H^{\sigma}},
\end{equation*}
by virtue of lemma~\ref{lem:pointwise_multiplication}. But
\begin{equation*}
  P_{m-1} = a_{0}(\varphi) + a_{1}(\varphi)\left(\varphi_{x}^{(m-1)}\circ\varphi^{-1}\right),
\end{equation*}
where $a_{j}(\varphi)$ is a homogeneous polynomial of degree $m-1-j$ in the variables $\varphi_{x}\circ\varphi^{-1}, \dotsc , \varphi_{x}^{(m-2)}\circ\varphi^{-1}$. Therefore, we have
\begin{equation*}
  \norm{P_{m-1}}_{H^{\sigma}} \lesssim \norm{a_{0}(\varphi)}_{H^{1}} + \norm{a_{1}(\varphi)}_{H^{1}} \norm{\varphi_{x}^{(m-1)}\circ\varphi^{-1}}_{H^{\sigma}},
\end{equation*}
where
\begin{equation*}
  \norm{a_{j}(\varphi)}_{H^{1}} \lesssim \left( \norm{\varphi_{x}}_{L^{\infty}}^{1/2} + \norm{\varphi_{x}}_{L^{\infty}}^{3/2} \right)^{m-1-j} \norm{1/\varphi_{x}}_{L^{\infty}}^{m-1-j} \norm{\varphi_{x}}_{H^{m-1}}^{m-1-j},
\end{equation*}
and
\begin{equation*}
  \norm{\varphi_{x}^{(m-1)}\circ\varphi^{-1}}_{H^{\sigma}} \lesssim \left(  \norm{\varphi_{x}}_{L^{\infty}}^{1/2} + \norm{\varphi_{x}}_{L^{\infty}} \norm{1/\varphi_{x}}_{L^{\infty}}^{(1+2\sigma)/2} \right) \norm{\varphi_{x}}_{H^{q-1}}
\end{equation*}
which ends the proof.
\end{proof}


\section*{Acknowledgments}

The authors wish to express their gratitude to the Erwin Schr\"{o}dinger International Institute for Mathematical Physics for providing an excellent research environment.

Finally it is a pleasure to thank Anders Melin and Elmar Schrohe for helpful discussions.



\begin{thebibliography}{10}

\bibitem{Arn1966}
V.~I. Arnold.
\newblock {S}ur la g\'{e}om\'{e}trie diff\'{e}rentielle des groupes de {L}ie de
  dimension infinie et ses applications \`{a} l'hydrodynamique des fluides
  parfaits.
\newblock {\em Ann. Inst. Fourier (Grenoble)}, 16(fasc. 1):319--361, 1966.

\bibitem{BBHM2013}
M.~Bauer, M.~Bruveris, P.~Harms, and P.~W. Michor.
\newblock {G}eodesic distance for right invariant {S}obolev metrics of
  fractional order on the diffeomorphism group.
\newblock {\em Ann. Global Anal. Geom.}, 44(1):5--21, 2013.

\bibitem{BC2005}
A.~Bressan and A.~Constantin.
\newblock {G}lobal solutions of the {H}unter-{S}axton equation.
\newblock {\em SIAM J. Math. Anal.}, 37(3):996--1026, 2005.

\bibitem{Bru2013}
M.~Bruveris.
\newblock {T}he energy functional on the {V}irasoro--{B}ott group with the
  ${L}^{2}$-metric has no local minima.
\newblock {\em Ann. Global Anal. Geom.}, 43(4):385 -- 395, 2013.

\bibitem{CH1993}
R.~Camassa and D.~D. Holm.
\newblock {A}n integrable shallow water equation with peaked solitons.
\newblock {\em Phys. Rev. Lett.}, 71(11):1661--1664, 1993.

\bibitem{CK2002}
A.~Constantin and B.~Kolev.
\newblock {O}n the geometric approach to the motion of inertial mechanical
  systems.
\newblock {\em J. Phys. A}, 35(32):R51--R79, 2002.

\bibitem{CK2003}
A.~Constantin and B.~Kolev.
\newblock {G}eodesic flow on the diffeomorphism group of the circle.
\newblock {\em Comment. Math. Helv.}, 78(4):787--804, 2003.

\bibitem{CLM1985}
P.~Constantin, P.~D. Lax, and A.~Majda.
\newblock {A} simple one-dimensional model for the three-dimensional vorticity
  equation.
\newblock {\em Comm. Pure Appl. Math.}, 38(6):715--724, 1985.

\bibitem{EM1970}
D.~G. Ebin and J.~E. Marsden.
\newblock {G}roups of diffeomorphisms and the motion of an incompressible
  fluid.
\newblock {\em Ann. of Math. (2)}, 92:102--163, 1970.

\bibitem{EK2011}
J.~Escher and B.~Kolev.
\newblock {T}he {D}egasperis-{P}rocesi equation as a non-metric {E}uler
  equation.
\newblock {\em Math. Z.}, 269(3-4):1137--1153, 2011.

\bibitem{EKW2012}
J.~Escher, B.~Kolev, and M.~Wunsch.
\newblock {T}he geometry of a vorticity model equation.
\newblock {\em Commun. Pure Appl. Anal.}, 11(4):1407--1419, Jul 2012.

\bibitem{EW2012}
J.~Escher and M.~Wunsch.
\newblock {R}estrictions on the geometry of the periodic vorticity equation.
\newblock {\em Communications in Contemporary Mathematics}, 14:1250016 (13
  pages), Sept. 2012.

\bibitem{Eul1765a}
L.~P. Euler.
\newblock {D}u mouvement de rotation des corps solides autour d'un axe
  variable.
\newblock {\em M\'{e}moires de l'acad\'{e}mie des sciences de Berlin},
  14:154--193, 1765.

\bibitem{FF1981/82}
A.~S. Fokas and B.~Fuchssteiner.
\newblock {S}ymplectic structures, their {B}\"{a}cklund transformations and
  hereditary symmetries.
\newblock {\em Phys. D}, 4(1):47--66, 1981/82.

\bibitem{Gay2009}
F.~Gay-Balmaz.
\newblock {\em {I}nfinite dimensional geodesic flows and the universal
  {T}eichm\"{u}ller space}.
\newblock PhD thesis, Ecole Polytechnique F\'{e}d\'{e}rale de Lausanne,
  Lausanne, 2009.

\bibitem{GMV2012}
E.~Grong, I.~Markina, and A.~Vasil'ev.
\newblock {S}ub-{R}iemannian geometry on infinite-dimensional manifolds.
\newblock {\em ArXiv e-prints}, Jan. 2012.

\bibitem{GMV2012a}
E.~Grong, I.~Markina, and A.~Vasil'ev.
\newblock {S}ub-{R}iemannian structures corresponding to {K}\"{a}hlerian
  metrics on the universal {T}eichm\"{u}ller space and curve.
\newblock {\em ArXiv e-prints}, Feb. 2012.

\bibitem{GR2007}
L.~Guieu and C.~Roger.
\newblock {\em {L}'alg\`{e}bre et le groupe de {V}irasoro}.
\newblock Les Publications CRM, Montreal, QC, 2007.
\newblock Aspects g\'{e}om\'{e}triques et alg\'{e}briques,
  g\'{e}n\'{e}ralisations. [Geometric and algebraic aspects, generalizations],
  With an appendix by Vlad Sergiescu.

\bibitem{Ham1982}
R.~S. Hamilton.
\newblock {T}he inverse function theorem of {N}ash and {M}oser.
\newblock {\em Bull. Amer. Math. Soc. (N.S.)}, 7(1):65--222, 1982.

\bibitem{HMR1998}
D.~D. Holm, J.~E.~E. Marsden, and T.~S. Ratiu.
\newblock {T}he {E}uler-{P}oincar\'{e} equations and semidirect products with
  applications to continuum theories.
\newblock {\em Adv. Math.}, 137(1):1--81, 1998.

\bibitem{HS1991}
J.~K. Hunter and R.~Saxton.
\newblock {D}ynamics of director fields.
\newblock {\em SIAM J. Appl. Math.}, 51(6):1498--1521, 1991.

\bibitem{IKT2013}
H.~Inci, T.~Kappeler, and P.~Topalov.
\newblock {\em {O}n the {R}egularity of the {C}omposition of
  {D}iffeomorphisms}, volume 226 of {\em Memoirs of the American Mathematical
  Society}.
\newblock American Mathematical Society, first edition, mar 2013.

\bibitem{KLM2008}
B.~Khesin, J.~Lenells, and G.~Misio{\l}ek.
\newblock {G}eneralized {H}unter-{S}axton equation and the geometry of the
  group of circle diffeomorphisms.
\newblock {\em Math. Ann.}, 342(3):617--656, 2008.

\bibitem{KM2003}
B.~Khesin and G.~Misio{\l}ek.
\newblock {E}uler equations on homogeneous spaces and {V}irasoro orbits.
\newblock {\em Adv. Math.}, 176(1):116--144, 2003.

\bibitem{Kol2004}
B.~Kolev.
\newblock {L}ie groups and mechanics: an introduction.
\newblock {\em J. Nonlinear Math. Phys.}, 11(4):480--498, nov 2004.

\bibitem{Lan1999}
S.~Lang.
\newblock {\em {F}undamentals of {D}ifferential {G}eometry}, volume 191 of {\em
  Graduate Texts in Mathematics}.
\newblock Springer-Verlag, New York, 1999.

\bibitem{Len2007}
J.~Lenells.
\newblock {T}he {H}unter-{S}axton equation describes the geodesic flow on a
  sphere.
\newblock {\em J. Geom. Phys.}, 57(10):2049--2064, 2007.

\bibitem{LMT2010}
J.~Lenells, G.~Misio{\l}ek, and F.~Ti{\u{g}}lay.
\newblock {I}ntegrable evolution equations on spaces of tensor densities and
  their peakon solutions.
\newblock {\em Comm. Math. Phys.}, 299(1):129--161, 2010.

\bibitem{MM2005}
P.~W. Michor and D.~Mumford.
\newblock {V}anishing geodesic distance on spaces of submanifolds and
  diffeomorphisms.
\newblock {\em Doc. Math.}, 10:217--245 (electronic), 2005.

\bibitem{Mil1984}
J.~Milnor.
\newblock {R}emarks on infinite-dimensional {L}ie groups.
\newblock In {\em Relativity, groups and topology, II (Les Houches, 1983)},
  pages 1007--1057. North-Holland, Amsterdam, 1984.

\bibitem{Mis1998}
G.~Misio{\l}ek.
\newblock {A} shallow water equation as a geodesic flow on the
  {B}ott-{V}irasoro group.
\newblock {\em J. Geom. Phys.}, 24(3):203--208, 1998.

\bibitem{Mis2002}
G.~Misio{\l}ek.
\newblock {C}lassical solutions of the periodic {C}amassa-{H}olm equation.
\newblock {\em Geom. Funct. Anal.}, 12(5):1080--1104, 2002.

\bibitem{NS1995}
S.~Nag and D.~Sullivan.
\newblock {T}eichm\"{u}ller theory and the universal period mapping via quantum
  calculus and the {$H^{1/2}$} space on the circle.
\newblock {\em Osaka J. Math.}, 32(1):1--34, 1995.

\bibitem{Olv1993}
P.~J. Olver.
\newblock {\em {A}pplications of {L}ie {G}roups to {D}ifferential {E}quations},
  volume 107 of {\em Graduate Texts in Mathematics}.
\newblock Springer-Verlag, New York, second edition, 1993.

\bibitem{Paz1983}
A.~Pazy.
\newblock {\em {S}emigroups of {L}inear {O}perators and {A}pplications to
  {P}artial {D}ifferential {E}quations}.
\newblock Springer-Verlag, 1983.

\bibitem{Poi1901}
H.~Poincar\'{e}.
\newblock {S}ur une nouvelle forme des \'{e}quations de la m\'{e}canique.
\newblock {\em C.R. Acad. Sci.}, 132:369--371, 1901.

\bibitem{Shk1998}
S.~Shkoller.
\newblock {G}eometry and curvature of diffeomorphism groups with {$H^1$} metric
  and mean hydrodynamics.
\newblock {\em J. Funct. Anal.}, 160(1):337--365, 1998.

\bibitem{TT2006}
L.~A. Takhtajan and L.-P. Teo.
\newblock {W}eil-{P}etersson metric on the universal {T}eichm\"{u}ller space.
\newblock {\em Mem. Amer. Math. Soc.}, 183(861):viii+119, 2006.

\bibitem{TV2011}
F.~Ti{\u{g}}lay and C.~Vizman.
\newblock {G}eneralized {E}uler-{P}oincar\'{e} equations on {L}ie groups and
  homogeneous spaces, orbit invariants and applications.
\newblock {\em Lett. Math. Phys.}, 97(1):45--60, 2011.

\bibitem{Tri1983}
H.~Triebel.
\newblock {\em {T}heory of {F}unction {S}paces}.
\newblock Birkh\"{a}user Boston, 1983.

\bibitem{Wun2010}
M.~Wunsch.
\newblock {O}n the geodesic flow on the group of diffeomorphisms of the circle
  with a fractional {S}obolev right-invariant metric.
\newblock {\em J. Nonlinear Math. Phys.}, 17(1):7--11, 2010.

\bibitem{Yin2004}
Z.~Yin.
\newblock {O}n the structure of solutions to the periodic {H}unter-{S}axton
  equation.
\newblock {\em SIAM J. Math. Anal.}, 36(1):272--283 (electronic), 2004.

\end{thebibliography}
\end{document}